\definecolor{DarkGreen}{rgb}{0.1,0.5,0.1}
\definecolor{DarkRed}{rgb}{0.5,0.1,0.1}
\definecolor{DarkBlue}{rgb}{0.1,0.1,0.5}
\newcommand{\cC}{\ensuremath{\mathcal{C}}}
\newcommand{\cU}{\ensuremath{\mathcal{U}}}
\newcommand{\cB}{\ensuremath{\mathcal{B}}}
\newcommand{\cE}{\ensuremath{\mathcal{E}}}
\newcommand{\cS}{\ensuremath{\mathcal{S}}}
\newcommand{\cL}{\ensuremath{\mathcal{L}}}
\newcommand{\cT}{\ensuremath{\mathcal{T}}}
\newcommand{\F}{{\mathbb F}}
\newcommand{\PR}[1]{{\mathbb{P}}\left\{ #1\right\}}
\newcommand{\EE}{\mathbb{E}}
\newcommand{\inabs}[1]{\left|#1\right|}
\newcommand{\inset}[1]{\left\{#1\right\}}
\newcommand{\inparen}[1]{\left(#1\right)}
\newcommand{\suchthat}{\,:\,}
\newcommand{\poly}{\mathrm{poly}}
\newcommand{\bigoh}{\mathcal{O}}
\newcommand{\eps}{\varepsilon}
\renewcommand{\epsilon}{\varepsilon}
\newtheorem{theorem}{Theorem} 
\newtheorem{lemma}[theorem]{Lemma} 
\newtheorem{definition}{Definition}
\newtheorem{remark}{Remark}
\newcommand{\lc}{\text{LC}}
\newcommand{\LC}{\text{LC}}
\renewcommand{\phi}{\varphi}
\newcommand{\UI}[2]{\ensuremath{#1^{(#2)}}}
\newcommand{\ie}{\emph{i.e.,}~}
\title{Linear-time list recovery of high-rate expander codes}
\date{\today}
\author{Brett Hemenway\thanks{Computer Science Department, University of Pennsylvania.  \texttt{fbrett@cis.upenn.edu}.} \and Mary Wootters\thanks{Computer Science Department, Carnegie Mellon University.  \texttt{marykw@cs.cmu.edu}.  Research funded by NSF MSPRF grant DMS-1400558}}
\begin{document}
	\maketitle

	\begin{abstract} 	
We show that expander codes, when properly instantiated, are high-rate list recoverable codes with linear-time list recovery algorithms.  List recoverable codes have been useful recently in constructing efficiently list-decodable codes, as well as explicit constructions of matrices for compressive sensing and group testing.  Previous list recoverable codes with linear-time decoding algorithms have all had rate at most $1/2$; in contrast, our codes can have rate $1 - \eps$ for any $\eps > 0$.  We can plug our high-rate codes into a construction of Meir (2014) to obtain linear-time list recoverable codes of arbitrary rates $R$, which approach the optimal trade-off between the number of non-trivial lists provided and the rate of the code.

While list-recovery is interesting on its own, our primary motivation is applications to list-decoding.  A slight strengthening of our result would implies linear-time and optimally list-decodable codes for all rates.  Thus, our result is a step in the direction of solving this important problem.

%In addition to providing new constructions for linear-time list-recoverable codes, our results add another example illustrating the flexibility of expander codes.  Introduced by Sipser and Spielman in 1996, these codes were originally notable for their extremely fast decoding algorithms: when the \em inner code \em used to construct the expander code has good distance, so does the expander code, and moreover it can be decoded in linear time.  Recently, Hemenway, Ostrovsky, and Wootters (2014) showed that these codes can also be locally corrected: if the inner code has good enough locality, then the outer code has excellent locality, and admits sublinear time decoding algorithms.  This work shows an analogue for list-recovery: if the inner code is list-recoverable (even inefficiently), then so is the expander code, and it admits a linear-time algorithm.  It is an interesting open question what other properties can be made efficient via the expander code construction.
\end{abstract}

	\section{Introduction}
		In the theory of error correcting codes, one seeks a code $\cC \subset \F^n$ so that it is possible to recover any \em codeword \em $c \in \cC$ given a corrupted version of that codeword.   
	The most standard model of corruption is from errors: some constant fraction of the symbols of a codeword might be adversarially changed.  
	Another model of corruption is that there is some uncertainty: in each position $i \in [n]$, there is some small list $S_i \subset \F$ of possible symbols.  
	In this model of corruption, we cannot hope to recover $c$ exactly; indeed, suppose that $S_i = \inset{ c_i, c'_i }$ for some codewords $c,c' \in \cC$.  
	However, we can hope to recover a short list of codewords that contains $c$.  Such a guarantee is called \em list recoverability. \em

	While this model is interesting on its own---there are several settings in which this sort of uncertainty may arise---one of our main motivations for studying list-recovery is \em list-decoding. \em  We elaborate on this more in Section~\ref{ssec:lr} below.

	We study the list recoverability of \em expander codes. \em  These codes---introduced by Sipser and Spielman in~\cite{SS96}---are formed from an expander graph and an inner code $\cC_0$.  
	One way to think about expander codes is that they preserve some property of $\cC_0$, but have some additional useful structure.  
	For example, \cite{SS96} showed that if $\cC_0$ has good distance, then so does the the expander code; the additional structure of the expander allows for a linear-time decoding algorithm.  
	In \cite{HOW13}, it was shown that if $\cC_0$ has some good (but not great) locality properties, then the larger expander code is a good locally correctable code.  
	In this work, we extend this list of useful properties to include list recoverability.  We show that if $\cC_0$ is a list recoverable code, then the resulting expander code is again list recoverable, but with a linear-time list recovery algorithm.

\subsection{List recovery}\label{ssec:lr}
	List recoverable codes were first studied in the context of list-decoding and soft-decoding: a list recovery algorithm is at the heart of the celebrated Guruswami-Sudan list-decoder for Reed-Solomon codes~\cite{GS99} and for related codes~\cite{GR08}.   Guruswami and Indyk showed how to use list recoverable codes to obtain good list- and uniquely-decodable codes~\cite{GI02,GI03,GI04a}.  More recently, list recoverable codes have been studied as interesting objects in their own right, and have found several algorithmic applications, in areas such as compressed sensing and group testing~\cite{NPR12,INR10,GNPRS13}.

	We consider list recovery from erasures, which was also studied
	in~\cite{G03a,GI04a}.  That is, some fraction of symbols
	may have no information; equivalently, $S_i = \F$ for a constant fraction of $i \in [n]$.  
	Another, stronger guarantee is list recovery from \emph{errors}.  That is, $c_i \not\in S_i$ for a constant fraction of $i \in [n]$.
	We do not consider this stronger guarantee here, and it is an interesting question to extend our results for erasures to errors.
	It should be noted that the problem of list recovery is interesting even when there are 
	neither errors nor erasures.  In that case, the problem is: given $S_i
	\subset \F$, find all the codewords $c \in \cC$ so that $c_i \in S_i$ for all
	$i$.  
	There are two parameters of interest.  First, the rate $R:= \log_q(|\cC|)/n$ of the code: ideally, we would like the rate to be close to $1$.  Second, the efficiency of the recovery algorithm: ideally, we would be able to perform list-recovery in time linear in $n$.  
	We survey the relevant results on list recoverable codes in Figure \ref{fig:litreview}.  
	While there are several known constructions of list recoverable codes with high rate, and there are several known constructions of list recoverable codes with linear-time decoders, 
	there are no known prior constructions of codes which achieve both at once.

	In this work, we obtain the best of both worlds, and give constructions of high-rate, linear-time list recoverable codes. 
	Additionally, our codes have constant (independent of $n$) list size and
	alphabet size.  As mentioned above, our codes are actually expander codes---in
	particular, they retain the many nice properties of expander codes: they are explicit
	linear codes which are efficiently (uniquely) decodable from a constant
	fraction of errors.

	We can use these codes, along with a construction of Meir~\cite{or}, to obtain linear-time list recoverable codes of any rate $R$, which obtain the optimal trade-off between the fraction $1 - \alpha$ of erasures and the rate $R$.  More precisely, for any $R \in [0,1], \ell \in \mathbb{N}$, and $\eta > 0$, there is some $L = L(\eta, \ell)$ so that we can construct rate $R$ codes which are $(R + \eta, \ell, L)$-list recoverable in linear time.  The fact that our codes from the previous paragraph have rate approaching $1$ is necessary for this construction.  To the best of our knowledge, linear-time list-decodable codes obtaining this trade-off were also not known.

It is worth noting that if our construction worked for list recovery from
\emph{errors}, rather than erasures, then the reduction above would obtain
linear-time list decodable codes, of rate $R$ and tolerating $1 - R - \eta$
errors.  (In fact, it would yield codes that are list-recoverable from errors,
which is a strictly stronger notion).  So far, all efficiently list-decodable
codes in this regime have polynomial-time decoding algorithms.  
In this sense, our work is a step in the direction of 
linear-time optimal list decoding, which is an important open problem in coding
theory.
\footnote{
In fact, adapting our construction to handle errors, even
if we allow polynomial-time decoding, is interesting.  First, it would give a new family of efficiently-decodable, optimally list-decodable codes, very different from the existing algebraic constructions.  Secondly, there are no known uniformly constructive explicit codes (that is, constructible in time $\poly(n) \cdot C_{\eta}$)
with both constant list-size and constant alphabet size---adapting our construction to handle errors, even with polynomial-time recovery, could resolve this.
}

\subsection{Expander codes}
	Our list recoverable codes are actually properly instantiated \em expander codes. \em 
	Expander codes are formed from a $d$-regular expander graph, and an \em inner code \em $\cC_0$ of length $d$, and are notable for their extremely fast decoding algorithms.  We give the details of the construction below in Section~\ref{sec:defs}.
	The idea of using a graph to create an error correcting code was first used by 
	Gallager~\cite{G63}, and the addition of an inner code was suggested by Tanner~\cite{T81}.
	Sipser and Spielman introduced the use of an expander graph in~\cite{SS96}.  
	There have been several improvements over the years
	by Barg and Zemor~\cite{Z01,BZ02,BZ05,BZ06}.

	Recently, Hemenway, Ostrovsky and Wootters~\cite{HOW13} showed that expander codes can also be \em locally corrected, \em matching the best-known constructions in the high-rate, high-query regime for locally-correctable codes.  That work showed that as long as the inner code exhibits suitable locality, then the overall expander code does as well.  This raised a question: what other properties of the inner code does an expander code preserve?  In this work, we show that as long as the inner code is list recoverable (even without an efficient algorithm), then the expander code itself is list recoverable, but with an extremely fast decoding algorithm.

	It should be noted that the works of Guruswami and Indyk cited above on linear-time list recovery are also based on expander graphs.  However, that construction is different from the expander codes of Sipser and Spielman.  In particular, it does not seem that the Guruswami-Indyk construction can achieve a high rate while maintaining list recoverability.

\subsection{Our contributions}
We summarize our contributions below:
\begin{enumerate}
	\item \textbf{The first construction of linear-time list-recoverable codes with rate approaching $1$. }  
As shown in Figure~\ref{fig:litreview}, existing constructions have either low rate or substantially super-linear recovery time.
The fact that our codes have rate approaching $1$ allows us to plug them into a construction of~\cite{or}, to achieve the next bullet point:
	\item \textbf{The first construction of linear-time list-recoverable codes with optimal rate/erasure trade-off.}
We will show in Section~\ref{sec:or} that our high-rate codes can be used to construct list-recoverable codes of arbitrary rates $R$, where we are given information about only an $R+\eps$ fraction of the symbols.
As shown in Figure~\ref{fig:litreview}, existing constructions which achieve this trade-off have substantially super-linear recovery time.
	\item \textbf{A step towards linear-time, optimally list decodable codes.}  Our results above are for list-recovery from \em erasures. \em While this has been studied before~\cite{GI04a}, it is a weaker model than a standard model which considers \emph{errors}.  As mentioned above, a solution in this more difficult model would lead to algorithmic improvements in list decoding (as well as potentially in compressed sensing, group testing, and related areas).  It is our hope that understanding the erasure model will lead to a better understanding of the error model, and that our results will lead to improved list decodable codes.
	\item \textbf{New tricks for expander codes.}
One take-away of our work is that expander codes are extremely flexible.  This gives a third example (after unique- and local- decoding) of the expander-code construction taking an inner code with some property and making that property efficiently exploitable.  We think that this take-away is an important observation, worthy of its own bullet point.  It is a very interesting question what other properties this may work for.
\end{enumerate}

	\begin{center}
		
\newcommand{\cmp}[2]{\begin{minipage}{#1in}\begin{center}#2\end{center}\end{minipage}}

\begin{figure}
\small
\begin{tabular}{|p{1.5in}|c|c|c|c|c|c|}
\hline
Source 	& Rate & List size & Alphabet & Agreement & Recovery & Explicit \\
		& & $L$ & size & $\alpha$ & time & Linear   \\
\hline\hline
Random code & $1 - \gamma$& $O(\ell/\gamma)$  & $\ell^{O(1/\gamma)}$ & $1 - O(\gamma)$ &  &  \\
%\cite{GI01} & $R = \frac{1}{\log(q)}\inparen{ \alpha \log(q/\ell) - H(\alpha) - H(\ell/q)(q/L) }$ & $L$ & $q$ & $\alpha$ & Exponential (existential result) \\
\cmp{1.5}{Random pseudolinear code \cite{GI01}} & $1 - \gamma$ & $O\inparen{\frac{\ell \log(\ell)}{\gamma^2}}$ & $\ell^{O(1/\gamma)}$ & $1 - O(\gamma)$ &   &  \\
\hline
\cmp{1.5}{Random linear code \cite{venkat-thesis}} & $1 - \gamma$ & $\ell^{O(\ell/\gamma^2)}$ & $\ell^{O(1/\gamma)}$ & $1 - O(\gamma)$ &  & L \\
\hline\hline
%Derivative codes (for $\ell=2$; and $q > n$)& $1 - \gamma$ & $q^{O(1/\gamma)}$ & $q^{O(1/\gamma^2)}$ & $1 - O(\gamma)$ & Polynomial \\  
\begin{minipage}{1.5in}\begin{center}\ \\
Folded Reed-Solomon codes~\cite{GR08} \\ \ 
\end{center}\end{minipage} 
&  $1 - \gamma$ & $n^{ O(\log(\ell)/\gamma )}$ & $ n^{O(\log(\ell)/\gamma^2)}$ & $1 - O(\gamma)$ & $n^{O( \log(\ell)/\gamma^2)}$ & EL\\
\hline
\begin{minipage}{1.5in}\begin{center}\ \\
Folded RS subcodes: evaluation points in an explicit subspace-evasive set~\cite{DL12} \\  
\end{center}\end{minipage}
& 
  $1 - \gamma$ & $(1/\gamma)^{ O(\ell/\gamma )}$ & $ n^{O(\ell/\gamma^2)}$ & $1 - O(\gamma)$ & $n^{O( \ell/\gamma^2)}$ & E\\
\hline
\begin{minipage}{1.5in}\begin{center}\ \\
Folded RS subcodes: evaluation points in a non-explicit subspace-evasive set~\cite{G11}  \\  
\end{center}\end{minipage}
&
  $1 - \gamma$ & $O\inparen{ \frac{\ell}{\gamma^2}} $ & $ n^{O(\ell/\gamma^2)}$ & $1 - O(\gamma)$ & $n^{O( \ell/\gamma^2)}$ & \\
%\hline\hline
%\begin{minipage}{1.0in}\begin{center}
%Folded RS codes (with or without restrictions, as above) \cite{GR08,G11,DL12} 
%\end{center}\end{minipage} &
%$R \in [0,1]$ & as above & as above & $R + \eta$ & \begin{minipage}{1cm}\begin{center}as\\ above\\ (poly($n$)) \end{center}\end{minipage} & as above \\
\hline
\cmp{1.5}{\ \\ (Folded) AG subcodes~\cite{GX12,GX13}\ \\} & 1 - $\gamma$ & $O(\ell/\gamma)$ & $\exp(\tilde{O}(\ell/\gamma^2))$ & $1 - O(\gamma)$ & $C_{\ell,\gamma}n^{O(1)}$ &\\
\hline
\hline\hline
\cite{GI03} & $2^{-2^{O(\ell)}}$ & $\ell$ & $2^{2^{2^{O(\ell)}}}$ & $1 - 2^{-2^{\ell^{O(1)}}}$ & $O(n)$ & E\\
\hline
\cite{GI04a} & $\ell^{-O(1)}$ & $\ell$ & $2^{\ell^{O(1)}}$ & $.999$ $(\star)$ & $O(n)$ & E\\
%Parvaresh-Vardy (for any $n \geq k$, $s \geq 1$) & $\frac{k}{sn}$ & $O( (qs)^s n\ell/k )$ & $q^s$ & $\alpha > 2\sqrt{ \frac{ k\ell}{n}}$ & Cannot have rate larger than $1/4$ \\
\hline\hline
This work & $1 - \gamma$ & $\ell^{ \gamma^{-4} \ell^{\ell^{ C\ell/\gamma^2}}}$ & $\ell^{O(1/\gamma)}$ & $1 - O(\gamma^3)$ $(\star)$ & $O(n)$ & EL \\
%\hline
%This work & $R \le 1$ & $L(\eta,\ell)$ & $\ell^{O(1/\eta)}$ & $R + \eta$ $(\star)$ & $O(n)$ & EL \\
\hline
\end{tabular}

\caption{Results on high-rate list recoverable codes and on linear-time
decodable list recoverable codes.  
%Our codes are high-rate list recoverable codes with linear time algorithms.
%Additionally, our codes have constant (independent of $n$) alphabet size and
%list size.  
Above, $n$ is the block length of the $(\alpha, \ell, L)$-list
recoverable code, and $\gamma > 0$ is sufficiently small and independent of $n$. Agreement rates
marked $(\star)$ are for erasures, and all others are from errors.  An empty
``recovery time" field means that there are no known efficient algorithms.
We remark that~\cite{GX13}, along with the explicit subspace designs of~\cite{GK14}, also give explicit constructions of high-rate AG subcodes
with polynomial time list-recovery and somewhat complicated parameters; the list-size $L$ becomes super-constant.
\newline\newline
The results listed above of~\cite{GR08,G11,DL12,GX12,GX13} also apply for any rate $R$ and agreement $R + \gamma$.
In Section~\ref{sec:or}, we show how to acheive the same trade-off (for erasures) in linear time using our codes.
} 
\label{fig:litreview}
\end{figure}

	\end{center}

	\section{Definitions and Notation}\label{sec:defs}
	We begin by setting notation and defining list recovery.
		An error correcting code is $(\alpha,\ell,L)$ \em list recoverable \em (from errors) if given lists of $\ell$ possible symbols 
		at every index, there are at most $L$ codewords whose symbols lie in a $\alpha$ fraction of the lists.
		We will use a slightly different definition of list recoverability, matching the definition of~\cite{GI04a}: to distinguish it from the definition above, we will call it list recoverability from \emph{erasures}.

		\begin{definition}[List recoverability from erasures]
		An error correcting code $\cC \subset \F_q^n$ is $(\alpha, \ell, L)$-list recoverable from erasures if the following holds.
		Fix any sets
		 $S_1,\ldots,S_n$ with $S_i \subset \F_q$, so that $|S_i| \le \ell$ for at least $\alpha n$ of the $i$'s and $S_i = \F_q$ for all remaining $i$.  Then there are most $L$ codewords $c \in \cC$ so that $c \in S_1 \times S_2 \times \cdots \times S_n$.
		\end{definition}
%	Notice that a code is $(1, \ell, L)$-list recoverable if any set of $L+1$ codewords has list cover size at least $\ell+1$.    Further when $\alpha=1$, it does not matter if we are asking about list recovery with errors or erasures.
	In our study of list recoverability, it will be helpful to study the \emph{list cover} of a list $\cS \subset \F_q^n$:
		\begin{definition}[List cover]
			For a list $\cS \subset \F_q^n$, the \emph{list cover} of $\cS$ is 
			\[ 
				\lc( \cS ) = \inparen{ \inset{ c_i : c \in \cS } }_{i=1}^n. 
			\]
			The \emph{list cover size} is $\max_{i \in [n]} | \lc( \cS )_i |$.
		\end{definition}

Our construction will be based on \emph{expander graphs}.  We say a $d$-regular graph $H$ is a \emph{spectral expander} with parameter $\lambda$, if $\lambda$ is the second-largest eigenvalue of the normalized adjacency matrix of $H$.  
Intuitively, the smaller $\lambda$ is, the better connected $H$ is---see~\cite{HLW06} for a survey of expanders and their applications.
We will take $H$ to be a \em Ramanujan graph, \em  that is $\lambda \leq \frac{2\sqrt{d-1}}{d}$;
explicit constructions of Ramanujan graphs are known for arbitrarily large values of $d$ \cite{LPS88,Mar88,Mor94}.
For a graph, $H$, with vertices $V(H)$ and edges $E(H)$, we use the following notation. 
		For a set $S \subset V(H)$, we use $\Gamma(S)$ to denote the neighborhood 
		\[ \Gamma(S) = \inset{ v \suchthat \exists u \in S, (u,v) \in E(H) }. \]
		For a set of edges $F \subset E(H)$, we use $\Gamma_F(S)$ to denote the neighborhood restricted to $F$:
		\[ \Gamma_F(S) = \inset{ v \suchthat \exists u \in S, (u,v) \in F}. \]

		Given a $d$-regular $H$ and an inner code $\cC_0$, we define the \em Tanner code \em  $\cC(H, \cC_0)$ as follows.
		\begin{definition}[Tanner code~\cite{T81}]
			If $H$ is a $d$-regular graph on $n$ vertices and $\cC_0$ is a linear code of block length $d$, then 
			the \emph{Tanner code} created from $\cC_0$ and $H$ is the linear code  $\cC \subset \F_q^{E(H)}$, where each edge $H$ is 
			assigned a symbol in $\F_q$ and the edges adjacent to each vertex form a codeword in $\cC_0$.
			\[
				\cC = \{ c \in \F_q^{E(H)} \suchthat \forall v \in V(H), c|_{\Gamma(v)} \in \cC_0 \}
			\]
		\end{definition}
		Because codewords in $\cC_0$ are \emph{ordered} collections of symbols whereas edges adjacent to 
		a vertex in $H$ may be unordered, creating a Tanner code requires choosing an ordering of the edges at each vertex of the graph.
		Although different orderings lead to different codes, our results (like all previous results on Tanner codes) 
		work for all orderings.  As our constructions work with any ordering of the edges adjacent to each vertex, 
		we assume that some arbitrary ordering has been assigned, and do not discuss it further.

		When the underlying graph $H$ is an expander graph,\footnote{%
		Although many expander codes rely on bipartite expander graphs (e.g. \cite{Z01}), we find it notationally simpler to use the non-bipartite 
		version.}~%
		we call the resulting Tanner code an \emph{expander code}.
		Sipser and Spielman showed that expander codes are efficiently uniquely decodable from about a $\delta_0^2$ fraction of errors.  
		We will only need unique decoding from erasures; the same bound of $\delta_0^2$ obviously holds for erasures as well, but for completeness we state the following lemma, which we prove in Appendix~\ref{sec:erasures}. 
\begin{lemma}\label{thm:unique}
			If $\cC_0$ is a linear code of block length $d$ that can recover from an $\delta_0 d$ number of erasures, 
			and $H$ is a $d$-regular expander with normalized second eigenvalue $\lambda$, 
			then the expander code $\cC$ can be recovered from a $\frac{\delta_0}{k}$ fraction of erasures in linear time 
			whenever $\lambda < \delta_0 - \frac{2}{k}$.
		\end{lemma}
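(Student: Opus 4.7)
The plan is to use the classical iterative peeling algorithm of Sipser--Spielman, adapted to erasures. Let $F \subseteq E(H)$ be the initial set of erased edges. Maintain a queue of \emph{ready vertices}, namely those $v$ with $\Gamma_F(v) \neq \emptyset$ and $|\Gamma_F(v)| \le \delta_0 d$. Repeatedly pop a ready vertex $v$, invoke the erasure-decoder of $\cC_0$ on the local view at $v$ to fill in all erasures incident to $v$, remove those edges from $F$, and enqueue any new ready vertices found among the other endpoints. The algorithm halts once no ready vertex remains.

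For correctness, let $F^*$ denote the erasures remaining at termination, and let $V^* = \{v \in V(H) \suchthat \Gamma_{F^*}(v) \neq \emptyset\}$. If $F^* \neq \emptyset$, then by the termination condition every $v \in V^*$ must satisfy $|\Gamma_{F^*}(v)| > \delta_0 d$ (else $v$ would have been processed). Every edge of $F^*$ has both endpoints in $V^*$, so
\[
  2|F^*| \;=\; \sum_{v \in V^*} |\Gamma_{F^*}(v)| \;>\; \delta_0 d |V^*|.
\]
On the other hand, $F^*$ is contained in the set of edges induced by $V^*$, whose size the expander mixing lemma bounds by $\frac{d|V^*|^2}{2n} + \frac{\lambda d |V^*|}{2}$. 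Combining these two inequalities and dividing by $d|V^*|/2$ yields $\delta_0 < |V^*|/n + \lambda$. The hypothesis $\lambda < \delta_0 - 2/k$ then forces $|V^*| > 2n/k$, and hence $|F^*| > \delta_0 dn/k$. But the initial erasure fraction satisfies $|F| \le (\delta_0/k)|E(H)| = \delta_0 dn/(2k) < \delta_0 dn/k$, contradicting $|F^*| \le |F|$. Hence $F^* = \emptyset$ and the algorithm recovers the transmitted codeword.

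Linear running time is straightforward once one observes that $d$ is constant: each invocation of the inner erasure-decoder runs in $O_d(1)$ time (the inner code has block length $d$), each vertex is processed at most once (after $v$ is processed, no edge incident to $v$ is ever erased again, so $v$ cannot re-enter the queue), and updating the queue costs $O(d)$ work per processed vertex. The total cost is therefore $O(n)$.

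Conceptually there is no serious obstacle---the argument is a routine sandwich of a combinatorial lower bound on $|F^*|$ against a spectral upper bound. The only delicate point is bookkeeping the normalizations: the erasure ``fraction'' in the lemma statement is with respect to $|E(H)| = dn/2$, and tracking this factor of two carefully is precisely what produces the $\lambda < \delta_0 - 2/k$ form of the spectral-gap hypothesis.
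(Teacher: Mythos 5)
Your proof is correct, and it takes a genuinely different route from the paper's. Both arguments hinge on the expander mixing lemma, but they deploy it in structurally different ways. The paper runs a \emph{round-synchronous} algorithm: it maintains a shrinking set $\cB_t$ of ``still-bad'' vertices, shows $|\cB_1| \le \tfrac{2n}{k}$, and then uses the mixing lemma to prove the one-step contraction $|\cB_t| \le \bigl(\lambda/(\delta_0 - 2/k)\bigr)^2 |\cB_{t-1}|$; correctness follows because the sequence geometrically decays to zero, and linear time follows by summing the geometric series $\sum_t |\cB_t|$. You instead analyze the \emph{terminal state} of a queue-based peeling process: if the residual erasure set $F^*$ is nonempty, the stopping condition forces every vertex of $V^*$ to be incident to more than $\delta_0 d$ residual erasures, so $2|F^*| > \delta_0 d|V^*|$, which you sandwich against the mixing-lemma upper bound $|F^*| \le \tfrac{d|V^*|^2}{2n} + \tfrac{\lambda d|V^*|}{2}$ to force $|V^*| > 2n/k$ and hence $|F^*| > \tfrac{\delta_0 dn}{k} > 2|F|$, a contradiction. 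The linear running time is then immediate---each vertex is processed at most once---rather than requiring the geometric-series bookkeeping. Your approach is arguably cleaner: it replaces a per-round decay estimate and a convergent series with a single inequality at the final configuration, and it decouples the correctness argument from the dynamics of the algorithm entirely (any processing order that runs to exhaustion would work). The tradeoff is that the paper's round-by-round bound gives an explicit $O(\log n)$ bound on the number of rounds, which could matter in a parallel model; your argument does not directly yield that.

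One small implementation point worth making explicit: the queue must be seeded with all initially-ready vertices, and a vertex should be marked once its local view is fully known so that repeated enqueueings (which can happen as several of its edges get filled by distinct neighbors) cost only $O(1)$ each; the total number of enqueue events is at most $O(|F|) = O(nd)$, so this does not affect the $O(n)$ bound.
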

		Throughout this work, $\cC_0 \subset \F_q^d$ will be $(\alpha_0, \ell, L)$-list recoverable from erasures, and the distance of $\cC_0$ is $\delta_0$. We choose
		$H$ to be a Ramanujan graph, and $\cC = \cC(H,\cC_0)$ will be the expander code formed from $H$ and $\cC_0$.

\section{Results and constructions}
In this section, we give an overview of our constructions and state our results.  Our main result (Theorem~\ref{thm:main}) is that list recoverable inner codes imply list recoverable expander codes.  We then instantiate this construction to obtain the high-rate list recoverable codes claimed in Figure~\ref{fig:litreview}.  Next, in Theorem~\ref{thm:approachcap} we show how to combine our codes with a construction of Meir~\cite{or} to obtain linear-time list recoverable codes which approach the optimal trade-off between $\alpha$ and $R$. 
\subsection{High-rate linear-time list recoverable codes}
		\label{sec:construction}
		Our main theorem is as follows.  
		\begin{theorem}\label{thm:main} 
			Suppose that $\cC_0$ is $(\alpha_0,\ell,L)$-list recoverable from erasures, of rate $R_0$, length $d$, and distance $\delta_0$, and suppose that
			$H$ is a $d$-regular expander graph with normalized second eigenvalue $\lambda$, if 
			\[
				\lambda < \frac{ \delta_0^2 }{ 12\ell^L  }
			\]
			Then the expander code $\cC$ formed from $\cC_0$ and $H$ has rate at least $2R_0-1$ and is  $(\alpha,\ell, L')$-list recoverable from erasures, where
			\[ 
				L' \leq \exp_\ell \inparen{ { \frac{ 72\,\ell^{2L} }{ \delta_0^2 (\delta_0 - \lambda )^2 } } }
			\]
			and $\alpha$ satisfies
			\[ 1 - \alpha \geq \inparen{1 - \alpha_0}\inparen{ \frac{\delta_0(\delta_0 - \lambda)}{6}}.  \]
			Further, the running time of the list recovery algorithm is $O_{L,\ell, \delta_0,d}(n).  $
		\end{theorem}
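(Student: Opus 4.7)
The plan is to exploit the Tanner-code structure: every codeword of $\cC$ is determined by the local codewords it induces on the neighborhoods $\Gamma(v)$ for $v \in V(H)$. First I apply inner list recovery at each vertex to convert the per-edge constraints $\{S_e\}$ into a per-vertex candidate set $\cL_v \subseteq \cC_0$ of size at most $L$. Then I use expansion both to show that this reduction is well-posed on most of the graph and to enumerate the globally consistent assignments.

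Call a vertex $v$ \emph{rich} if fewer than a $(1-\alpha_0)$ fraction of its incident edges satisfy $S_e = \F_q$. At any rich vertex, the $(\alpha_0,\ell,L)$-list-recoverability of $\cC_0$ gives, in time $O_{\cC_0}(1)$, a set $\cL_v \subseteq \cC_0$ of size at most $L$ containing every local codeword consistent with the lists on $\Gamma(v)$. The expander mixing lemma, applied to the erased edges, bounds the number of non-rich vertices in terms of the overall fraction of erasures $1-\alpha_0$, yielding the claimed $1-\alpha \geq (1-\alpha_0)\cdot\delta_0(\delta_0-\lambda)/6$; the rate bound $2R_0-1$ is the standard Tanner-code counting argument.

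For the global step, observe that any two distinct codewords $c\ne c'$ consistent with $\{S_e\}$ induce distinct labellings $f_c, f_{c'} : V_{\mathrm{rich}} \to [L]$, and the set $V_{\ne}$ where they differ has the property that every $v \in V_\ne$ has at least $\delta_0 d$ incident edges to other vertices in $V_\ne$ (since an edge where $c$ and $c'$ disagree forces both endpoints into $V_\ne$ by the distance of $\cC_0$). The expander mixing lemma then forces $|V_\ne|$ to be a $\gtrsim (\delta_0-\lambda)$ fraction of all vertices. To turn this ``distance-on-labellings'' fact into both the list-size bound and a linear-time algorithm, I would enumerate codewords by guessing values on a carefully chosen seed $T \subseteq E(H)$ of size roughly $\ell^{2L}/(\delta_0(\delta_0-\lambda))^2$: each of the $\le \ell^{|T|}$ seeds is completed to a full codeword by Lemma~\ref{thm:unique} (linear-time unique decoding from erasures), and the hypothesis $\lambda<\delta_0^2/(12\ell^L)$ is exactly what guarantees that after the seed and the small-list edges are fixed, the effective erasure fraction drops below the Sipser--Spielman threshold so that the extension step actually succeeds.

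The main obstacle is the quantitative bookkeeping needed to match the $\ell^L$ in the expansion hypothesis and the $\ell^{2L}$ in the exponent of $L'$. These factors arise because the consistency constraint at each rich vertex involves not merely the $L$ candidate local codewords but their $\sim \ell^L$ possible joint symbol-patterns on the relevant incident edges, so propagating constraints through the expander incurs a multiplicative cost of $\ell^L$ cases per step, while pairwise comparisons between candidate codewords inject a second factor of $\ell^L$ into the seed size. Making these counts align cleanly with the theorem's constants --- and in particular choosing the seed so that $\ell^{|T|}$ matches the claimed $L'$ bound while the unique-decoding threshold of Lemma~\ref{thm:unique} is safely met --- is where the bulk of the detailed calculation will lie.
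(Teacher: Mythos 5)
Your first step --- inner list recovery at every rich vertex, yielding per-vertex candidate sets $\cL_v \subseteq \cC_0$ of size $\le L$ --- matches the paper, and the rate and agreement bookkeeping are plausible (though the relation between $\alpha$ and $\alpha_0$ comes from a simple Markov count of vertices with too many erased edges, not from the mixing lemma). The genuine gap is in your global step. Your seed set $T$ has \emph{constant} size $\sim \ell^{2L}/(\delta_0(\delta_0-\lambda))^2$, so after you guess the $|T|$ symbols on $T$, a $1-o(1)$ fraction of the $nd/2$ edges are still unassigned: a small-list edge with $|S_e|\le\ell$ is \emph{not} ``fixed,'' and a constant number of pinned edges cannot move the erasure fraction below the Sipser--Spielman threshold of Lemma~\ref{thm:unique}. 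So the claim that ``after the seed and the small-list edges are fixed, the effective erasure fraction drops below the Sipser--Spielman threshold'' is exactly where the argument breaks. Likewise, your ``distance-on-labellings'' observation is a restatement of the expander-code distance bound; it shows two consistent codewords differ on a $\gtrsim(\delta_0-\lambda)$ fraction of vertices, but that alone does not bound the list size without a separate covering argument, and the paper does not go this route.

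The idea you are missing is the one that makes a single guessed edge symbol determine a \emph{linear} fraction of $E(H)$, so that a constant number of seed guesses suffices. At each good vertex $v$, each neighbor $u$ induces a map $\phi^{(u)}:\cS_v\to\F_q$ (sending $c\mapsto c_u$); declare $u\sim_v u'$ if $\phi^{(u)}$ and $\phi^{(u')}$ agree up to a permutation of $\F_q$. There are at most $\ell^{L}$ possible such maps, hence at most $\ell^L$ nontrivial local equivalence classes, so their average size is $\gtrsim \alpha_0 d/\ell^L$; and if $u\sim_v u'$ then the value of $c$ on $(v,u)$ determines its value on $(v,u')$. Taking the transitive closure of these local relations across $H$ gives global equivalence classes, and the expander mixing lemma shows that (after pruning classes of size $<\eps d$, which costs $O(\eps\ell^L)$ fraction of edges) every surviving global class has size $\ge (nd/2)\,\eps(\eps-\lambda)$. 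Fixing one symbol on a seed edge then propagates deterministically to its entire global class, so after at most $1/(\eps(\eps-\lambda))$ seeds you have covered all but a $3\eps\ell^L$ fraction of edges, and \emph{that} is what drops the erasure fraction below the unique-decoding threshold. This propagation step is free (not ``a multiplicative cost of $\ell^L$ cases per step''); the $\ell^L$ enters through the bound on the number of local classes, and the $\ell^{2L}$ in the exponent of $L'$ enters as $1/(\eps(\eps-\lambda)) \approx 2/\eps^2$ with $\eps\approx\delta_0(\delta_0-\lambda)/(6\ell^L)$.
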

Above, the notation $\exp_\ell(\cdot)$ means $\ell^{(\cdot)}$.
Before we prove Theorem \ref{thm:main} and give the recovery algorithm, we show how to instantiate these codes to give the parameters claimed in Figure~\ref{fig:litreview}.
%			
%\subsection{Instantiation}
%
%			We can instantiate things with any $(\alpha, \ell, L)$-list recoverable code.
			We will use a random linear code as the inner code.  The following theorem about the list recoverability of random linear codes follows from a union bound argument (see Guruswami's thesis~\cite{venkat-thesis}).
			\begin{theorem}[\cite{venkat-thesis}]\label{thm:randlin}
			For any $q \geq 2$, for all $1 \leq \ell \leq 2$, and for all $L > \ell$, and for all $\alpha_0 \in (0,1]$, a random linear code of rate $R_0$ is $(\alpha, \ell, L)$-list recoverable, with high probability, as long as
\begin{equation}\label{eq:linearcode}
			 R_0 \geq \frac{1}{\lg(q)} \inparen{ \alpha_0 \lg(q/\ell) - H(\alpha_0) - H(\ell/q) \frac{ q }{\log_q(L+1)} } - o(1). 
\end{equation}
			\end{theorem}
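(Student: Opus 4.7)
My plan is a standard probabilistic union bound, where the one essential ingredient is a linearity trick that replaces a $(L+1)$-factor by a $\log_q(L+1)$-factor. Let $G \in \F_q^{k \times n}$ be uniformly random with $k = R_0 n$, and set $\cC = \inset{Gm : m \in \F_q^k}$. The bad event to bound is the existence of lists $S_1, \ldots, S_n \subseteq \F_q$ with $|S_i| \leq \ell$ together with $L+1$ codewords of $\cC$, each agreeing with the lists on at least $\alpha_0 n$ positions (i.e., satisfying $|\inset{i : c_i \in S_i}| \geq \alpha_0 n$).

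The linearity trick I would invoke is the following. If $\cC$ has more than $L$ codewords in any target set $T \subseteq \F_q^n$, then $T \cap \cC$ must contain $j := \lceil \log_q(L+1) \rceil$ linearly independent codewords, since otherwise $T \cap \cC$ would lie in an $\F_q$-subspace of $\cC$ of dimension less than $j$, and thus have at most $q^{j-1} \leq L$ elements. This reduces the task to bounding the probability that there exist $j$ linearly independent messages $m_1, \ldots, m_j \in \F_q^k$ with every $Gm_t$ agreeing with the lists on $\geq \alpha_0 n$ positions.

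From here the union bound is routine, with three contributions: the $q^{kj}$ choices of ordered $j$-tuples of messages; the $\binom{q}{\ell}^n \leq 2^{nq H(\ell/q)}$ choices of lists; and, using the crucial fact that when $m_1,\ldots,m_j$ are linearly independent the codewords $Gm_1,\ldots,Gm_j$ are jointly uniform and independent in $\F_q^n$ as $G$ varies, the probability that each codeword achieves $\geq \alpha_0 n$ agreement positions is at most $\inparen{\binom{n}{\alpha_0 n}(\ell/q)^{\alpha_0 n}}^j \leq 2^{-jn(\alpha_0 \lg(q/\ell) - H(\alpha_0))}$. Multiplying these three estimates, requiring the product to tend to $0$, then dividing through by $j \lg q = \lg(q)\cdot \log_q(L+1)$ and rearranging, produces exactly the claimed inequality, with the $o(1)$ absorbing the rounding of $j$ to an integer and the polynomial prefactors of the binomial-tail bound.

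The whole argument is technically routine; the only conceptual point is the linearity trick, which is what converts a naive $1/(L+1)$ loss in the denominator of the entropy term into the much more favorable $1/\log_q(L+1)$ appearing in the statement.
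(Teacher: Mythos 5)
The paper does not include its own proof of this theorem; it cites Guruswami's thesis and describes it as ``a union bound argument,'' and your proof is the standard one and is correct. The linearity trick---replacing the union over $L+1$ codewords with a union over $j=\lceil\log_q(L+1)\rceil$ linearly independent ones, whose images under a uniformly random generator matrix are jointly uniform and independent in $(\F_q^n)^j$---is precisely what turns the naive $1/(L+1)$ loss into $1/\log_q(L+1)$, and the remaining union bound over lists and agreement sets, together with the entropy bookkeeping you describe, yields the stated bound (which, as your derivation shows, should read $R_0\leq\cdots$ rather than $R_0\geq\cdots$; the $\geq$ in the paper's statement is evidently a typo, as is the domain $1\leq\ell\leq 2$).
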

		%	For $q \gg \ell$, we have the approximation
		%	$ H(\ell/q) = \frac{ \ell }{q} \lg \inparen{ \frac{q}{\ell} } + O( (\ell/q) ).$
For any $\gamma > 0$, and any (small constant) $\zeta > 0$, choose
\[q = \exp_\ell\inparen{\frac{1}{\zeta\gamma}} \qquad \text{and} \qquad L = \exp_\ell\inparen{ \frac{\ell}{\zeta^2\gamma^2}} 
\qquad \text{and} \qquad \alpha_0 = 1 - \gamma(1 - 3\zeta). \]
Then Theorem~\ref{thm:randlin} asserts that with high probability, a random linear code of rate $R_0 = 1 - \gamma$ is $(\alpha_0, \ell, L)$-list recoverable.
Additionally, with high probability a random linear code with the parameters above will have distance $\delta_0 = \gamma(1 + O(\gamma))$.  By the union bound there exists an inner code $\cC_0$ with both the above distance and the above list recoverability.

Plugging all this into Theorem \ref{thm:main}, we get explicit codes of rate $1 -2\gamma$ which are $(\alpha, \ell, L')$-list recoverable in linear time, for
			\[ L' = \exp_\ell \inparen{ \gamma^{-4} \exp_\ell\inparen{ \exp_\ell \inparen{ C \ell / \gamma^2 } } } \]
for some constant $C = C(\zeta)$, 
and
			\[ \alpha = 1 - \inparen{ \frac{1 - 3\zeta}{6} } \gamma^3. \]
This recovers the parameters claimed in Figure~\ref{fig:litreview}.
Above, we can choose
\[ d = O\inparen{ \frac{ \ell^{2L} }{ \gamma^4}} \]
so that the Ramanujan graph would have parameter $\lambda$ obeying the conditions of Theorem \ref{thm:main}.  Thus, when $\ell,\gamma$ are constant,
so is the degree $d$, and the running time of the recovery algorithm is linear in $n$, and thus in the block length $nd$ of the expander code.
Our construction uses an inner code with distance $\delta_0 = \gamma(1+O(\gamma))$.  It is known that if the inner code in an expander graph has distance $\delta_0$, 
the expander code has distance at least $\Omega(\delta^2)$ (see for example Lemma~\ref{thm:unique}).  Thus the distance of our construction is $\delta = \Omega(\gamma^2)$.
			\begin{remark}\label{rem:inner}
			Both the alphabet size and the list size $L'$ are
constant, if $\ell$ and $\gamma$ are constant.  However, $L'$ depends rather
badly on $\ell$, even compared to the other high-rate constructions in
Figure~\ref{fig:litreview}.  This is because the bound \eqref{eq:linearcode} is
likely not tight; it would be interesting to either improve this bound or to
give an inner code with better list size $L$.  The key restrictions for such an
inner code are that (a) the rate of the code must be close to $1$; (b) the list size $L$ must be constant, and (c) the code
must be linear.  Notice that (b) and 
(c) prevent the use of either Folded Reed-Solomon codes or their restriction to
a subspace evasive set, respectively.
			\end{remark}

\subsection{List recoverable codes approaching capacity}\label{sec:or}

We can use our list recoverable codes, along with a construction of Meir~\cite{or}, to construct codes which approach the optimal trade-off between the rate $R$ and the agreement $\alpha$. To quantify this, we state the following analog of the list-decoding capacity theorem.

\begin{theorem}[List recovery capacity theorem]\label{thm:listreccap}
	For every $R > 0$, and $L \geq \ell$, there is some code $\cC$ of rate $R$ over $\F_q$ which is $(R + \eta(\ell,L), \ell, L)$-list recoverable from erasures, 
	for any 
	\[ 
		\eta(\ell, L) \geq \frac{4\ell}{L} \qquad \text{ and } \qquad q \geq \ell^{2/\eta}. 
	\]
	Further, for any constants $\eta, R > 0$, any integer $\ell$, any code of rate $R$ which is $(R - \eta, \ell, L)$-list recoverable from erasures must have $L = q^{\Omega(n)}$.
\end{theorem}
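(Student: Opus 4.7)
The plan is to handle the two directions separately, both via standard coding-theoretic techniques: a random coding argument for existence, and pigeonhole for the lower bound.

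For the existence direction, I take $\cC$ to consist of $M = q^{Rn}$ codewords drawn independently and uniformly from $\F_q^n$ and show that $\cC$ is $(R+\eta,\ell,L)$-list recoverable from erasures with positive probability. Since enlarging a list $S_i$ can only enlarge $S_1\times\cdots\times S_n$, the worst case adversary configuration has exactly $\alpha n = (R+\eta)n$ coordinates with $|S_i|=\ell$ and $S_i = \F_q$ elsewhere. For any such fixed $(S_1,\ldots,S_n)$, each codeword lies in the product independently with probability $(\ell/q)^{\alpha n}$, so any specified $L+1$ codewords all lie in the product with probability $(\ell/q)^{\alpha n(L+1)}$. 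A union bound over the $\binom{n}{\alpha n}\binom{q}{\ell}^{\alpha n}$ configurations and $\binom{M}{L+1}$ offending $(L+1)$-subsets yields a failure probability at most
$$\binom{n}{\alpha n}\binom{q}{\ell}^{\alpha n}\binom{M}{L+1}\inparen{\ell/q}^{\alpha n(L+1)}.$$
Using $\log_q\binom{n}{\alpha n}\leq n/\log_2 q$, $\binom{q}{\ell}\leq q^\ell$, and $\binom{M}{L+1}\leq q^{Rn(L+1)}$, taking $\log_q$ and dividing by $n$ reduces the requirement to
$$(L+1)\eta\bigl(1-(R+\eta)/2\bigr) > (R+\eta)\ell + 1/\log_2 q.$$
Substituting $\log_q\ell \leq \eta/2$ (from $q\geq \ell^{2/\eta}$) and $1/\log_2 q \leq \eta/2$ (likewise), and using $R+\eta\leq 1$ (the only nontrivial regime), this reduces to $L\eta > 2(R+\eta)\ell$, which is verified by direct substitution under the hypothesis $\eta \geq 4\ell/L$.

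For the lower bound, fix any rate-$R$ code $\cC\subseteq \F_q^n$ and any subset $T\subset [n]$ of size $\alpha n = (R-\eta)n$. Let $\pi_T\colon \F_q^n\to \F_q^T$ be coordinate projection; since $|\pi_T(\cC)|\leq q^{\alpha n}$ while $|\cC|=q^{Rn}$, pigeonhole supplies some $\sigma\in \F_q^T$ with at least $q^{(R-\alpha)n}=q^{\eta n}$ preimages in $\cC$. Setting $S_i = \{\sigma_i\}$ for $i\in T$ and $S_i=\F_q$ otherwise is a valid erasure configuration meeting the $\alpha$-fraction threshold, and every one of the $\geq q^{\eta n}$ preimages of $\sigma$ lies in $S_1\times\cdots\times S_n$. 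Hence any $(R-\eta,\ell,L)$-list recoverable code of rate $R$ must have $L\geq q^{\eta n}=q^{\Omega(n)}$.

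The main (and essentially only) obstacle is arithmetic: one needs to confirm that the specific thresholds $\eta\geq 4\ell/L$ and $q\geq \ell^{2/\eta}$ are exactly tuned for the random-coding union bound to close, with enough slack to absorb the auxiliary $1/\log_2 q$ and $(R+\eta)\ell$ terms. The lower bound is immediate from pigeonhole and requires no delicate choice of parameters.
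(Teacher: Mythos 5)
Your argument is correct. For the existence direction you run essentially the same random-coding union bound as the paper; the only difference is bookkeeping — you observe that the adversary's worst case is exactly $\alpha n$ size-$\ell$ lists (since relaxing any constraint only enlarges the product set), whereas the paper retains a sum over all counts $k \geq \alpha n$ of restricted coordinates, whose dominant term is the same $k=\alpha n$. Your arithmetic closes cleanly, with the $(L+1)$ versus $L$ slack quietly absorbing the $1/\log_2 q$ term, so nothing is missing there. For the lower bound you genuinely depart from the paper: the paper picks $T$ of size $\alpha n$ and lists $S_i$ of size $\ell$ uniformly at random and computes that the expected number of covered codewords is at least $q^{\eta n}$, while you fix $T$, project $\cC$ onto $\F_q^T$, and apply pigeonhole to locate a fiber of the projection containing at least $q^{(R-\alpha)n}=q^{\eta n}$ codewords, all captured by the singleton lists $S_i=\{\sigma_i\}$ for $i\in T$. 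Your route is a bit cleaner in two respects: it exhibits an explicit bad configuration rather than an average-case statement, and it makes visible that the lower bound is entirely independent of $\ell$ — singleton lists already force exponential list size. Both arguments are elementary and yield the same bound.
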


The proof is given in Appendix~\ref{sec:listreccap}.
Although Theorem \ref{thm:listreccap} ensures the existence of certain list-recoverable codes, the proof of Theorem \ref{thm:listreccap} 
is probabilistic, and does not provide a means of efficiently identifying (or decoding) these codes.
Using the approach of \cite{or} we can turn our construction of linear-time list recoverable codes 
into list recoverable codes approaching capacity.

\begin{theorem}\label{thm:approachcap}
	For any $R > 0$, $\ell > 0$, and for all sufficiently small $\eta > 0$,
	there is some $L$, depending only on $L$ and $\eta$, and some constant $d$, depending only on $\eta$,
	so that whenever $q \geq \ell^{6/\eta}$ there is a family of $(\alpha,\ell,L)$-list recoverable codes $\cC \subset \F_{q^d}^n$ with rate at least $R$, for 
	\[ 
		\alpha = R + \eta.
	\]
	Further, these codes can be list-recovered in linear time. 
	%Further, these codes have distance at least $1 - R - \eta$, and they can be list recovered in linear time. \TODO{Need to show they have this distance}
\end{theorem}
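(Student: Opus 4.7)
The plan is to apply Meir's combining construction from \cite{or}, using our high-rate linear-time list-recoverable code from Section \ref{sec:construction} as an outer code and a capacity-achieving list-recoverable code (from Theorem \ref{thm:listreccap}) as a constant-size inner code.

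First, I would choose a small auxiliary parameter $\gamma \ll \eta$ and invoke the instantiation of Theorem \ref{thm:main} from Section \ref{sec:construction} to obtain a linear-time $(1 - O(\gamma^3), \ell_1, L_1)$-list recoverable outer code $\cC_{\text{out}}$ of rate $R_{\text{out}} = 1 - 2\gamma$ over an alphabet of size $q^d$, where $d$ depends only on $\eta$ and $\ell_1, L_1$ are constants depending on $\gamma$ and $\ell$. Simultaneously, I would apply Theorem \ref{thm:listreccap} to produce an inner code $\cC_{\text{in}} \subset \F_q^d$ of rate $R_{\text{in}} := R/R_{\text{out}}$ that is $(R_{\text{in}} + \eta', \ell, L_{\text{in}})$-list recoverable from erasures, matching $L_{\text{in}} = \ell_1$ to the outer code's input list-size parameter and choosing $\eta' = \Theta(\eta)$ so that after composition the net agreement comes out to $R + \eta$. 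Since $d$ depends only on $\eta$, $\cC_{\text{in}}$ can be located by brute-force search in time independent of $n$.

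Next, I would form $\cC$ by code concatenation, identifying each symbol of $\F_{q^d}$ with a message of $\cC_{\text{in}}$ and encoding accordingly. The rate of the resulting code is $R_{\text{out}} \cdot R_{\text{in}} = R$, and the code lies in $\F_{q^d}^n$ (equivalently $\F_q^{nd}$) as required. For list recovery, given input lists $(S_i)$ covering an $R + \eta$ fraction of positions, I would first run brute-force list recovery of $\cC_{\text{in}}$ on each of the $n$ outer blocks of size $d$, obtaining for each outer position either a list of at most $L_{\text{in}}$ candidate outer symbols or an outright erasure (when the block is too damaged). This costs $O(d \cdot \ell^d) = O(1)$ per block, hence $O(n)$ total. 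I would then feed the resulting outer lists into the linear-time algorithm for $\cC_{\text{out}}$ to produce the final list, of size at most $L := L_1$.

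The main technical obstacle is the agreement calculation in the outer step: one must show via a Markov/averaging argument that, given an input agreement of $R + \eta$ distributed arbitrarily over positions and the inner code's local capability of $R_{\text{in}} + \eta'$, the fraction of outer positions at which $\cC_{\text{in}}$ successfully narrows to a list of size $L_{\text{in}}$ is at least $1 - O(\gamma^3)$, which is what $\cC_{\text{out}}$ needs. The alphabet-size condition $q \geq \ell^{6/\eta}$ then comes directly from Theorem \ref{thm:listreccap}'s requirement applied with $\eta' = \Theta(\eta)$, and the remaining verifications (rate, constant list size depending only on $\ell$ and $\eta$, and linear total runtime) follow routinely from the linear-time property of $\cC_{\text{out}}$ and the constancy of $d$.
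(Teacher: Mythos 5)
Your high-level plan---compose the high-rate outer code from Section~\ref{sec:construction} with a constant-size capacity-achieving inner code from Theorem~\ref{thm:listreccap}---is the right starting point, but you have replaced the essential expander-redistribution step of Meir's (and Alon--Luby's) construction with plain code concatenation, and the Markov/averaging argument you invoke to control the outer agreement does not actually go through without it. In plain concatenation the adversary chooses \emph{which} positions to erase, and can concentrate all erasures on a $1-\alpha$ fraction of entire inner blocks. Then those blocks yield no list at all, so the outer code would have to be $(R+\eta,\ell_1,L_1)$-list recoverable; but the high-rate outer code only tolerates erasure fraction $O(\gamma^3)$, which is far less than $1-R-\eta$ for any constant $R$ bounded away from $1$. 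Quantitatively, writing $\alpha_i$ for the fraction of non-erased positions in inner block $i$, Markov's inequality gives
\[
\Pr_i\bigl[\alpha_i < R_{\mathrm{in}} + \eta'\bigr] \;\le\; \frac{1-\alpha}{1-R_{\mathrm{in}}-\eta'} \;=\; \frac{1-R-\eta}{1-R_{\mathrm{in}}-\eta'},
\]
and since $R_{\mathrm{in}}\approx R$, the right-hand side is $\Theta(1)$, not the required $O(\gamma^3)$, whenever $R<1$.

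What the paper does differently---and what your sketch is missing---is a redistribution step: after encoding each outer block with $\cC_0$, the resulting $nd$ symbols are scattered across the final codeword coordinates according to a $d$-regular bipartite expander $G=(U,V,E)$ with the sampler property that $|\Gamma(u)\cap A| \ge (|A|/n - \eta/3)\,d$ for at least an $\alpha_1$ fraction of $u\in U$ and every $A\subset V$. This guarantees deterministically that for all but an $O(\gamma^3)$ fraction of inner blocks the locally observed erasure fraction is within $\eta/3$ of the global average $1-\alpha$, no matter how the adversary chooses the erasure set. The decoder then pulls lists back along $G$ before running the inner brute-force list recovery and finally the linear-time outer recovery. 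Without this pseudorandom scattering, the adversary can always defeat the agreement calculation, so the expander is not an optimization but the crux of the construction; you should replace your ``concatenation plus Markov'' step with it.
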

We follow the approach of~\cite{or}, which adapts a construction of~\cite{AL96} to take advantage of high-rate codes
with a desirable property.  Informally, the takeaway of~\cite{or} is that, given a family of codes with any nice property
and rate approaching $1$, one can make a family of codes with the same nice property that achieves the Singleton bound.
For completeness, we describe the approach below, and give a self-contained proof.

\begin{proof}[Proof of Theorem~\ref{thm:approachcap}]
Fix $R, \ell$, and $\eta$.
Let $\alpha = R + \eta$ as above, and suppose $q \geq \ell^{2/\eta}$.
Let $R_0 = \alpha - \frac{2\eta}{3} = R + \frac{\eta}{3}$ and $R_1 = 1 - \frac{\eta}{3}$.
We construct the code $\cC$ from three ingredients: an ``outer'' code $\cC_1$ that 
is a high-rate list recoverable code with efficient decoding, a bipartite expander, 
and a short ``inner'' code that is list recoverable.  More specifically, the construction relies on:

\begin{enumerate}
	\item 
		A high-rate outer code $\cC_1$.
		Concretely, $\cC_1$, will be our expander-based list recoverable codes guaranteed by Theorem \ref{thm:main} in Section \ref{sec:construction}.
		The code $\cC_1 \subset \F_q^m$ will be of rate $R_1 = 1 - \eta/3$, and which is $(\alpha_1, \ell_1, L_1)$-list recoverable from erasures for $\alpha_1 = 1 - O(\eta^3)$ and $L_1 = L_1(\eta, \ell_1)$ depends only on $\eta,\ell_1$.  The distance of this code is $\delta_1 = \Omega(\eta^2)$.
		Note that the block-length, $m$, is specified by the choice of $R_1$ and $\ell_1$.
	\item 
		A bipartite expander graph $G = (U,V,E)$ on $2\cdot m/(R_0d) =: 2n$ vertices, with degree $d$, which has the following property:
		for at least $\alpha_1 n$ of the vertices in $U$, 
		\[ 
			|\Gamma(u) \cap A| \geq (|A|/n - \eta/3)d, 
		\]
		for any set $A \subset V$.
		Such a graph exists with degree $d$ that depends only on $\alpha_1$ and $\eta$, and hence only on $\eta$.  

	\item 
		A code $\cC_0 \subset \F_q^d$ of rate $R_0$, 
		which is $(\alpha - \eta/3, \ell, \ell_1)$-list recoverable, where
		\[ 
			R_0 = \alpha - \frac{2\eta}{3} ,\qquad  \ell_1 = \frac{12 \ell}{\eta} 
		\]
		as in the first part of Theorem~\ref{thm:listreccap}.  
		Although the codes guaranteed by Theorem~\ref{thm:listreccap} do not come with 
		decoding algorithms, we will choose $d$ to be a constant, so the code $\cC_0$ can be list recovered in constant time by a brute-force 
		recovery algorithm.
\end{enumerate}
We remark that several ingredients of this construction share notation with ingredients of the construction in the previous section (the degree $d$, code $\cC_0$, etc), although they are different.  Because this section is entirely self-contained, we chose to overload notation to avoid excessive sub/super-scripting and hope that this does not create confusion.  The only properties of the code $\cC_1$ from the previous section we use are those that are listed in Item 1.

The success of this construction relies on the fact that the code $\cC_1$ can have rate approaching 1 (specifically, rate $1 - \eta/3$).
The efficiency of the decoding algorithm comes from the efficiency of decoding $\cC_1$: since $\cC_1$ has linear time list recovery, 
the resulting code will also have linear time list recovery.

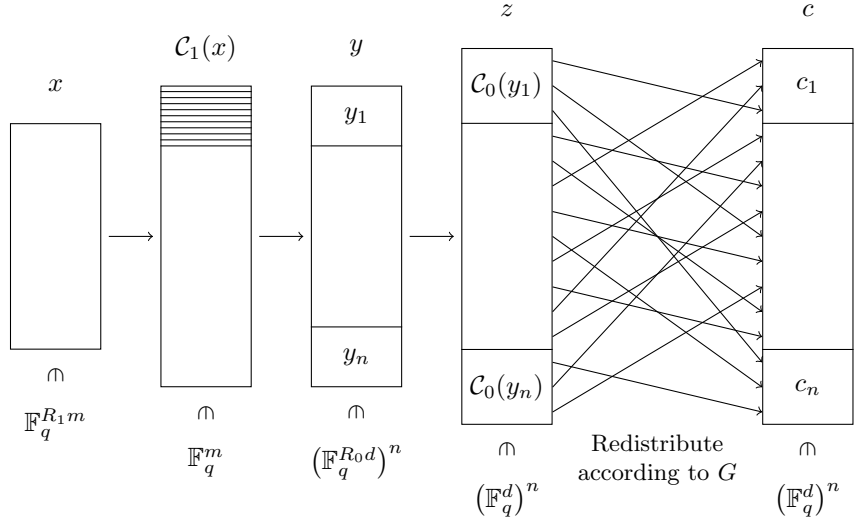
\begin{figure}
\begin{center}

\tikzstyle{codeword}=[rectangle,draw,minimum width=1.2cm]
\begin{tikzpicture}

	\coordinate (X0) at (-5,0);

	\node (X) at (-5,0) [codeword,minimum height=3cm] {};
	\node (C1X) at ([xshift=2cm]X) [codeword,minimum height=4cm] {};
	\node (Y) at ([xshift=2cm]C1X) [codeword,minimum height=4cm] {};
	\node (Z) at ([xshift=2cm]Y) [codeword,minimum height=5cm] {};
	\node (C) at ([xshift=4cm]Z) [codeword,minimum height=5cm] {};

	\node (XTOP) at ([yshift=.5cm]X.north) {$x$};
	\node (C1XTOP) at ([yshift=.5cm]C1X.north) {$\cC_1(x)$};
	\node (YTOP) at ([yshift=.5cm]Y.north) {$y$};
	\node (ZTOP) at ([yshift=.5cm]Z.north) {$z$};
	\node (CTOP) at ([yshift=.5cm]C.north) {$c$};

	\node (XBOT) at ([yshift=-1cm]X.south) {$\F_q^{R_1m}$};
	\node (C1XBOT) at ([yshift=-1cm]C1X.south) {$\F_q^m$};
	\node (YBOT) at ([yshift=-1cm]Y.south) {$\inparen{\F_q^{R_0d}}^n$};
	\node (ZBOT) at ([yshift=-1cm]Z.south) {$\inparen{\F_q^{d}}^n$};
	\node (CBOT) at ([yshift=-1cm]C.south) {$\inparen{\F_q^d}^n$};

	\path (X.south) to node [sloped] {$\in$} (XBOT);
	\path (C1X.south) to node [sloped] {$\in$} (C1XBOT);
	\path (Y.south) to node [sloped] {$\in$} (YBOT);
	\path (Z.south) to node [sloped] {$\in$} (ZBOT);
	\path (C.south) to node [sloped] {$\in$} (CBOT);

	\draw [->] ([xshift=1mm]X.east) to ([xshift=-1mm]C1X.west);	
	\draw [->] ([xshift=1mm]C1X.east) to ([xshift=-1mm]Y.west);	
	\draw [->] ([xshift=1mm]Y.east) to ([xshift=-1mm]Z.west);	
	%\draw [->] ([xshift=1mm]Z.east) to ([xshift=-1mm]C.west);	

	\draw [-] ($(Y.north west)!.2!(Y.south west)$) to ($(Y.north east)!.2!(Y.south east)$);
	\node at ($(Y.north)!.1!(Y.south)$) {$y_1$};

	\draw [-] ($(Z.north west)!.2!(Z.south west)$) to ($(Z.north east)!.2!(Z.south east)$);
	\node at ($(Z.north)!.1!(Z.south)$) {$\cC_0(y_1)$};

	\draw [-] ($(C.north west)!.2!(C.south west)$) to ($(C.north east)!.2!(C.south east)$);
	\node at ($(C.north)!.1!(C.south)$) {$c_1$};

	\draw [-] ($(Y.north west)!.8!(Y.south west)$) to ($(Y.north east)!.8!(Y.south east)$);
	\node at ($(Y.north)!.9!(Y.south)$) {$y_n$};

	\draw [-] ($(Z.north west)!.8!(Z.south west)$) to ($(Z.north east)!.8!(Z.south east)$);
	\node at ($(Z.north)!.9!(Z.south)$) {$\cC_0(y_n)$};

	\draw [-] ($(C.north west)!.8!(C.south west)$) to ($(C.north east)!.8!(C.south east)$);
	\node at ($(C.north)!.9!(C.south)$) {$c_n$};

	\foreach \x in {.02,.04,.06,.08,.1,.12,.14,.16,.18,.2} {
		\draw [-] ($(C1X.north west)!\x!(C1X.south west)$) to ($(C1X.north east)!\x!(C1X.south east)$);
	}

	\foreach \a/\x in {1/.1,2/.3,3/.5,4/.7,5/.9} {
		\foreach \b/\y in {1/-.066,2/0,3/.066} {
			\pgfmathsetmacro{\z}{\x+\y}
			%\pgfmathsetmacro{\w}{mod( 7*int(3*mod(3*int(5*\x-.5)+4,5)+(15*\y)+1),15)  - \y}
			\pgfmathsetmacro{\w}{.2*mod( 7*(3*(\a-1)+\b-1), 5 )+.1 - \y}
			%\pgfmathsetmacro{\w}{\x+\y}
			\draw [->] ($(Z.north east)!\z!(Z.south east)$) to ($(C.north west)!\w!(C.south west)$);
		}
	}

	\node [anchor=north] at ($(Z.south east)!.5!(C.south west)$) {\begin{minipage}{3.2cm} \begin{center} \small Redistribute according to $G$ \end{center} \end{minipage} };
\end{tikzpicture}
\caption{The construction of~\cite{or}. (1) Encode $x$ with $\cC_1$.  (2) Bundle symbols of $\cC_1(x)$ into groups of size $R_0d$.  (3) Encode each bundle with $\cC_0$.  (4) Redistribute according to the $n \times n$ bipartite graph $G$.  If $(u,v) \in E(G)$, and $\Gamma_i(u) = v$ and $\Gamma_j(v) = u$ then we define $c^{(v)}_j = z^{(u)}_i$.\label{fig:OM}}
\end{center}
\end{figure}

We assemble these ingredients as follows.  To encode a message $x \in \F_q^{R_1m}$, we first encode it using $\cC_1$, to obtain $y \in \F_q^m$.  
Then we break $[m]$ into $n:=m/(R_0d)$ blocks of size $R_0d$, and write $y = (y^{(1)}, \ldots, y^{(n)})$ for $y^{(i)} \in \F_q^{R_0d}$.  
We encode each part $y^{(i)}$ using $\cC_0$ to obtain $z^{(i)} \in \F_q^d$.  
Finally, we ``redistribute" the symbols of $z = (z^{(1)}, \ldots, z^{(n)})$ according to the expander graph $G$ to obtain a codeword $c \in (\F_q^d)^n$ as follows. 
We identify symbols in $z$ with left-hand vertices, $U$, in $G$ and symbols in $c$ with right-hand vertices, $V$, in $G$.
For any right-hand vertex, $v \in V$ then the $v$th symbol of $c$ is 
\[
	c_v = (a_1,\ldots,a_d) \in \F_q^d
\]
the $a_i$ are defined such that if $\Gamma_i(v) = u$ and $\Gamma_j(u) = v$, then $a_i = z^{(u)}_j$.
Intuitively, the $d$ components of $z^{(u)}$ are sent out on the $d$ edges defined by $\Gamma(u)$ and the $d$ components of 
$c^{(v)}$ are the $d$ symbols coming in on the $d$ edges defined by $\Gamma(v)$.

It is easy to verify that the rate of $\cC$ is 
\[
	R = R_0\cdot R_1 = (\alpha - 2\eta/3)(1 - \eta/3) \geq \alpha - \eta.
\] 	
Next, we give the linear-time list recovery algorithm for $\cC$ and argue that it works.  
Fix a set $A \subset V$ of $\alpha n$ coordinates so that each $v \in A$ has an associated list $S_v \subset \F_q^d$ of size at most $\ell$.
First, we distribute these lists back along the expander.  
Let $B \subset U$ be the set of vertices $u$ so that 
\[
	|\Gamma(u) \cap A| \geq \inparen{ |A|/n - \eta/3 }d = (\alpha -\eta/3)d.
\]
The structure of $G$ ensures that $|B| \geq \alpha_1 n$.  
For each of the vertices $u \in B$, the corresponding codeword $z^{(u)}$ of $\cC_0$ 
has at least an $(\alpha - \eta/3)$ fraction of lists of size $\ell$.  
Thus, for each such $u \in B$, we may recover a list $T_u$ 
of at most $\ell_1$ codewords of $\cC_0$ which are candidates for $z^{(u)}$.  
Notice that because $\cC_0$ has constant size, this whole step takes time linear in $n$.  
These lists $T_u$ induce lists $T_i$ of size $\ell_1$ for at least an $\alpha_1$ fraction of the indices $i \in [m]$.
Now we use the fact that $\cC_1$ can be list recovered in linear time from
$\alpha_1 m$ such lists; this produces a list of $L_1$ possibilities for the
original message $x$, in time linear in $m$ (and hence in $n = m/(R_0d)$) where
$L_1$ depends only on $\alpha_1$ and $\ell_1$.  Tracing backwards, $\alpha_1$
depends only on $\eta$, and $\ell_1$ depends on $\ell$ and $\eta$.  Thus,
$L_1$ is a constant depending only on $\ell$ and $\eta$, as claimed.
\end{proof}

\section{Recovery procedure and proof of Theorem~\ref{thm:main}}
In the rest of the paper, we prove Theorem \ref{thm:main}, and present our algorithm.
		The list recovery algorithm is presented in Algorithm~\ref{algo:main}, and proceeds in three steps, which we discuss in the next three sections.
\begin{enumerate}
	\item	First, we list recover locally at each vertex.  We describe this first step and set up some notation in \ref{ssec:recover}.  
	\item Next, we give an algorithm that recovers a list of $\ell$ ways to choose symbols on a constant fraction of the edges of $H$, using the local information from the first step.  This is described in Section \ref{ssec:partial}, and the algorithm for this step is given as Algorithm~\ref{algo:partial}.

	\item Finally, we repeat Algorithm~\ref{algo:partial} a constant number of times (making more choices and hence increasing the list size) to form our final list.  This third step is presented in Section~\ref{ssec:together}.
\end{enumerate}

Fix a parameter $\eps > 0$ to be determined later.\footnote{For reference, we have included a table of our notation for the proof of
Theorem~\ref{thm:main} in Figure~\ref{fig:params}.}
Set 
\begin{equation}\label{eq:alpha}
	 \alpha^* = \alpha^*(\alpha_0) := 1 - \eps \ell^L \inparen{\frac{ 1 - \alpha_0 }{ 2 - \alpha_0}}.
\end{equation}
We assume that $\alpha \geq \alpha^*$.
We will eventually choose $\eps$ so that this requirement becomes the requirement in Theorem~\ref{thm:main}.

\begin{figure}
	\begin{center}
		\begin{tabular}{|c|p{12cm}|}
			\hline
				$(\alpha_0,\ell,L)$ & List recovery parameters of inner code $\cC_0$ \\
			\hline
				$\delta_0$ & The inner code $\cC_0$ can recover from $\delta_0$ fraction of erasures \\
			\hline
			%	$R$ & Rate of resulting expander code \\
			%\hline
			%	$\eta$ & Difference between rate and agreement parameters, $\eps = \alpha - R$ \\
			%\hline
				$n$ & Number of vertices in the graph $G$\\
			\hline
				$d$ & Degree of the graph (and length of inner code) \\
			\hline
				$\lambda$ & Normalized second eigenvalue of the graph \\
			\hline
				$\eps$ & A parameter which we will choose to be $\delta_0/(2k\ell^L)$.  We will find a large subgraph $H' \subset H$ so that every equivalence class in $H'$ has size at least $\eps d$. \\
			\hline
				$k$ & \rule{0pt}{1.2em} Parameter such that $k > \frac{2}{\delta_0 - \lambda}$ \\
			\hline
				$\alpha$ & The final expander code is list recoverable from an $\alpha$ fraction of erasures\\
			\hline
				$\alpha^*$ & \rule{0pt}{1.2em} Bound on the agreement $\alpha$. We set  $\alpha^* = 1 - \eps \ell^L \inparen{ \frac{1 - \alpha_0}{2- \alpha_0} }$; we assume $\alpha \geq \alpha^*$ \\
			\hline
				$\beta$ & \rule{0pt}{1.2em} Bound on the fraction of bad vertices. We set $\beta = 1 - \frac{1 - \alpha*}{2(1-\alpha_0)}$ \\
			\hline
		\end{tabular}
	\caption{Glossary of notation for the proof of Theorem~\ref{thm:main}. \label{fig:params}}
	\end{center}
\end{figure}

\subsection{Local list recovery}\label{ssec:recover}
In the first part of Algorithm \ref{algo:main} we locally list recover at each ``good" vertex.
Below, we define ``good" vertices, along with some other notation which will be useful for the rest of the proof, and 
record a few consequences of this step.

For each edge $e \in E(H)$, we are given a list $\cL_e$, with the guarantee that at least an $\alpha \geq \alpha^*$ fraction of the lists $\cL_e$ are of size at most $\ell$.  
We call an edge \emph{good} if its list size is at most $\ell$, and \emph{bad} otherwise.  Thus, there are at least a
\begin{equation}\label{eq:beta}
 \beta = \beta(\alpha_0) := 1 - \frac{1 - \alpha^*}{2(1 - \alpha_0)} 
\end{equation}
fraction of vertices which have at least $\alpha_0d$ good incident edges.  Call these vertices \em good, \em and call the rest of them \em bad. \em   For a vertex $v$, define the good neighbors $G(v) \subset \Gamma(v)$ by
\[ G(v) = \begin{cases} \Gamma(v) & v \text{ is bad } \\ \inset{ u \in \Gamma(v) \suchthat (v,u) \text{ is good } } & v \text{ is good } \end{cases} \]
Now, the first step of Algorithm \ref{algo:main} is to run the list recovery algorithm for $\cC_0$ on all of the good vertices.  Notice that because $\cC_0$ has constant size, this takes constant time.  We recover lists $\cS_v$ at each good vertex $v$.  For bad vertices $v$, we set $\cS_v = \cC_0$ for notational convenience (we will never use these lists in the algorithm).
We record the properties of these lists $\cS_v$ below, and we use the shorthand $(\alpha_0,\ell,L)$-legit to describe them.
\begin{definition}\label{def:legit}
A collection $\inset{ \cS_v }_{v \in V(H)}$ of sets $\cS_v \subset \cC_0$ is $(\alpha_0, \ell, L)$-legit if the following hold.
\begin{enumerate}
	\item For at least $\beta n$ vertices $v$ (the \em good \em vertices), $|\cS_v| \leq L$.
	\item For every good vertex $v$, at most $(1 - \alpha_0)d$ indices $i \in [d]$ have list-cover size $|\LC( \cS_v )_i| \leq \ell$.
	\item There are at most a $(1 - \alpha^*) + 2(1 - \beta)$ fraction of edges which are either bad or adjacent to a bad vertex.
\end{enumerate}
Above, $\beta$ is as in Equation~\eqref{eq:beta}, and $\alpha$ satisfies the assumption \eqref{eq:alpha}.
\end{definition}
The above discussion implies that the sets $\inset{\cS_v}$ in Algorithm \ref{algo:main} are $(\alpha_0, \ell, L)$-legit. 

		\subsection{Partial recovery from lists of inner codewords}\label{ssec:partial}
	Now suppose that we have a collection of $(\alpha_0, \ell, L)$-legit sets $\inset{ \cS_v }_{v \in V(H)}$.  We would like to recover all of the codewords in $\cC(H, \cC_0)$ consistent with these lists. 
The basic observation is that choosing one symbol on one edge is likely to fix a number of other symbols at that vertex.  To formalize this, we introduce a notion of local equivalence classes at a vertex.

			\begin{definition}[Equivalence Classes of Indices]
				Let $\inset{ \cS_v }$ be $(\alpha_0, \ell, L)$-legit and fix a good vertex $v \in V(H)$.  For each $u \in G(v)$, define
	\begin{align*}
		\UI{\phi}{u} : \cS_v 	&\rightarrow \LC( \cS_v )_u \subset \F_q \\
				c		&\mapsto c_u\qquad \mbox{(the $u$th symbol of codeword $c$)}
	\end{align*}
	Define an equivalence relation on $G(v)$ by
	\[
			u \sim_v u' \Leftrightarrow \text{ there is a permutation $\pi:\F_q \rightarrow \F_q$ so that }  \pi \circ \UI{\phi}{u}= \UI{\phi}{u'}.
	\]
	For notational convenience, for $u \not\in G(v)$, we say that $u$ is equivalent$_v$ to itself and nothing else.
	Define $\cE(v,u) \subset E(H)$ to be the (local) equivalence class of edges at $v$ containing $(v,u)$:
	\[ \cE(v,u) = \inset{ (v,u') \suchthat u' \sim_v u }. \]
	For $u \not\in G(v)$, $|\cE(v,u)| = 1$, and we call this class \em trivial. \em
\end{definition}	
It is easily verified that $\sim_v$ is indeed an equivalence relation on
$\Gamma(v)$, so the equivalence classes are well-defined.  
Notice that
$\sim_v$ is specific to the vertex $v$: in particular, $\cE(v,u)$ is not
necessarily the same as $\cE(u,v)$.  
For convenience, for bad vertices $v$, we say that $\cE(v,u) = \inset{(v,u)}$ for all $u \in \Gamma(v)$ (all of the local equivalence classes at $v$ are trivial).

We observe a few facts that follow
immediately from the definition of $\sim_v$.
\begin{enumerate}
	\item  For each $u \in G(v)$, we have $| \LC(\cS_v)_u | \leq \ell$ by the
assumption that $\cS_v$ is legit.  Thus, there are most $\ell^L$ choices for
$\phi^{(u)}$, and so there are at most $\ell^L$ nontrivial equivalence classes
$\cE(v,u)$.  (That is, classes of size larger than $1$).
	\item
	The average size of a nontrivial equivalence class $\cE(v,u)$ is at least $\frac{\alpha_0 d}{\ell^L}$.	
	\item
	If $u \sim_v u'$, then for any $c \in
\cS_v$ the symbol $c_u$ determines the symbol $c_{u'}$.  Indeed, $c_u =
\pi(c_{u'})$ where $\pi$ is the permutation in the definition of $\sim_u$.
In particular, \em learning the symbol on $(v,u)$ determines the symbol on
$(v,u')$ for all $u' \sim_v u$. \em
\end{enumerate}

The idea of the partial recovery algorithm follows from this last observation.  If we pick an edge at random and assign it a value, then we expect that this determines the value of about $\alpha_0 d/\ell^L$ other edges.  These choices should propagate through the expander graph, and end up assigning a constant fraction of the edges.  We make this precise in Algorithm \ref{algo:partial}.
To make the intuition formal, we also define a notion of global equivalence between two edges.
			\begin{definition}[Global Equivalence Classes]
				For an expander code $\cC(H,\cC_0)$, and
$(\alpha_0, \ell, L)$-legit lists $\inset{ \cS_v }_{v \in V(H)}$, we define an equivalence relation $\sim$ as follows.  For good vertices $a,b,u,v$, we say
				\[
					(a,b) \sim (u,v)
				\]
				if there exists a path from $(a,b)$ to $(u,v)$ where each adjacent pair of edges is in its local equivalence relation, \ie
	there exists $(w_0=a,w_1=b),(w_1,w_2),\ldots,(w_{n-2},w_{n-1}), (w_{n-1}=u,w_n=v)$  so that
				\[
					(w_i,w_{i+1}) \in \cE(w_{i+1},w_{i+2}) \mbox{ for $i = 0,\ldots,n-2$ } \qquad \text{and} \qquad (w_i, w_{i+1}) \text{ is good for all $i = 0,\ldots, n-1$.}
				\]
				Let $\cE^{H}_{(u,v)} \subset E(H)$ denote the global equivalence class of the edge $(u,v)$.
			\end{definition}
It is not hard to check that this indeed forms an equivalence relation on the
edges of $H$ and that a single decision about which of the $\ell$ symbols
appears on an edge $(u,v)$ forces the assignment of all edges in
$\cE^{H}_{(u,v)}$.   
			\begin{algorithm}
			\KwIn{Lists $\cS_v \subset \cC_0$ which are $(\alpha_0, \ell, L)$-legit, and a starting good edge $(u,v)$, where both $u$ and $v$ are good vertices. }
			\KwOut{A collection of at most $\ell$ partial assignments $\UI{x}{\sigma} \in \inparen{ \F_q \cup \inset{\bot} }^{E(H)}$,
			for $\sigma \in LC(\cS_v)_u \cap \LC(\cS_u)_v $. }
			
			\For{ $\sigma \in \LC(\cS_v)_u \cap \LC( \cS_u)_v$ }{
				Initialize $\UI{x}{\sigma} = (\bot,\ldots,\bot) \in \inparen{ \F_q \cup \inset{\bot} }^{E(H)}$
				
				\For{ $(v,u') \in \cE(v,u)$ }{
					Set $\UI{x}{\sigma}_{(v,u')}$ to the only value consistent with the assignment $\UI{x}{\sigma}_{(v,u)} = \sigma$
				}
				\For{ $(v',u) \in \cE(u,v)$ }{
					Set $\UI{x}{\sigma}_{(v',u)}$ to the only value consistent with the assignment $\UI{x}{\sigma}_{(v,u)} = \sigma$
				}
				
				%\tcc{Above and in the rest of the algorithm when we say $x^{(\alpha)}_{(v,u')}$ we mean $x^{(\alpha)}_e$ for the undirected edge $e$ corresponding to $(v,u')$ and $(u',v)$. }
				
				Initialize a list $W_0 = \inset{u,v}$. 
				
				\For{ $t=0,1,\ldots$ }
				{
					\textbf{If} $W_t = \emptyset$, break.
					%\If{ $W_t = \emptyset $ }{ break }
					
					$W_{t+1} = \emptyset$
					
					\For{ $a \in W_t$ }
					{
						\For { $b \in \Gamma(a)$ where $\UI{x}{\sigma}_{(a,b)} \neq \bot$ }
						{
							Add $b$ to $W_{t+1}$\\
							\For{ $(b,c) \in \cE(b,a)$ }
							{
								Set $\UI{x}{\sigma}_{(b,c)}$ to the only value consistent with the assignment $c_{(a,b)} = \UI{x}{\sigma}_{(a,b)}$
								\label{line:assign}
							}
						}
					}
				}
			}	
			\Return{$\UI{x}{\sigma}$ for $\sigma \in \cS_v(u)$.}
				
			\caption{Partial decision algorithm}\label{algo:partial}
			\end{algorithm}
			
			Algorithm \ref{algo:partial} takes an edge $(u,v)$ and iterates through all $\ell$ possible assignments 
			to $(u,v)$ and turns this into $\ell$ possible assignments for the vectors $\langle c_e \rangle_{e \in \cE^{H}_{(u,v)}}$.
			In order for Algorithm \ref{algo:partial} to be useful, the graph $H$ should have some large equivalence classes.
			Since each good vertex has at most $\ell^L$ nontrivial equivalence classes which partition its $\geq \alpha_0 d$ good edges,
			most of the nontrivial local equivalence classes are larger than $\frac{\alpha_0 d}{\ell^L}$.
			This means that a large fraction of the edges are themselves contained in large local equivalence classes.
			This is formalized in Lemma \ref{lem:subgraph}.

			\begin{lemma}\label{lem:subgraph} Suppose that $\inset{ \cS_v }$ is $(\alpha_0, \ell, L)$-legit, and consider the local equivalence classes defined with respect to $\cS_v$.  There a large subgraph $H'$ of $H$ so that
				$H'$ contains only edges in large local equivalence classes.
				In particular,
				\begin{itemize}
					\item $V(H') = V(H),$
					\item for all $(v,u) \in E(H')$, $|\cE(v,u) \cap E(H')| \geq \eps d$,
					\item $|E(H')| \geq \inparen{ \frac{nd}{2}} \inparen{ 1 - 3 \eps \ell^L }$.
				\end{itemize}
			\end{lemma}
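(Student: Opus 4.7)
The plan is to construct $H'$ by a two-stage edge-removal from $H$. In the first stage, discard every edge that is either bad or incident to a bad vertex; by property (3) of Definition~\ref{def:legit} this removes at most a $(1-\alpha^*) + 2(1-\beta)$ fraction of edges. Call the intermediate subgraph $H_1$. Every surviving edge $(v,u)$ is now good and has both endpoints good, so $u \in G(v)$ and $v \in G(u)$; in particular, the local classes $\cE(v,u)$ and $\cE(u,v)$ are among the at most $\ell^L$ equivalence classes of $\sim_v$ on $G(v)$ and $\sim_u$ on $G(u)$, respectively.

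In the second stage, repeatedly remove from the current graph any edge $(v,u)$ with $|\cE(v,u) \cap E_{\text{current}}| < \eps d$ or $|\cE(u,v) \cap E_{\text{current}}| < \eps d$, halting at a fixed point $H'$. The first two bullets of the lemma are then immediate: $V(H') = V(H)$ since only edges are deleted, and every surviving edge has both of its local classes meeting $E(H')$ in at least $\eps d$ edges.

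The heart of the argument is bounding the number of edges removed in the second stage, which I plan to do via a charging scheme. Assign each removed edge to one endpoint at which its local class first dropped below $\eps d$ (breaking ties arbitrarily), and call this the \emph{trigger vertex}. For any vertex $v$ the $\sim_v$-classes of $G(v)$ number at most $\ell^L$, and each such class $C$ can trigger at $v$ only in the single step when $|C \cap E_{\text{current}}|$ first falls below $\eps d$, at which point at most $\eps d - 1$ surviving edges of $C$ are removed. Hence each vertex triggers at most $\ell^L \eps d$ removals, and summing over all $n$ vertices yields at most $n \ell^L \eps d = 2\eps\ell^L \cdot (nd/2)$ edges removed in the second stage.

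Combining the two stages, the total fraction of edges removed is at most $(1-\alpha^*) + 2(1-\beta) + 2\eps\ell^L$. Substituting \eqref{eq:alpha} and \eqref{eq:beta} gives $1 - \alpha^* = \eps\ell^L(1-\alpha_0)/(2-\alpha_0)$ and $2(1-\beta) = \eps\ell^L/(2-\alpha_0)$, whose sum is exactly $\eps\ell^L$, so the total removal is at most $3\eps\ell^L$ and $|E(H')| \geq (1 - 3\eps\ell^L)(nd/2)$. The main obstacle is avoiding over-counting in the cascading prune — removing an edge can shrink the local class at its other endpoint and provoke further removals — and the trigger-vertex scheme handles this cleanly by making each removal chargeable to a single class-at-a-vertex event, which can fire only once, keeping the bound linear in $n\ell^L\eps d$ regardless of how many iteration rounds occur.
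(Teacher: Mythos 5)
Your proof is correct and follows essentially the same two-stage deletion argument as the paper: first remove bad edges and edges at bad vertices using property (3) of legitimacy, then iteratively prune small local classes and charge the deletions to at most $\ell^L$ class-triggering events per vertex, each costing at most $\eps d$ edges. The trigger-vertex bookkeeping you introduce is a slightly more explicit version of the paper's ``active vertex'' counting, and the final arithmetic (showing $(1-\alpha^*)+2(1-\beta)=\eps\ell^L$) matches the paper's computation.
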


			\begin{proof}
				Consider the following process:  
				\begin{itemize}
				\item Remove all of the bad edges from $H$, and remove all of the edges incident to a bad vertex.
				\item While there are any vertices $v,u \in V(H)$ with $|\cE(v,u)| < \eps d$:
					\begin{itemize}
						\item  Delete all classes $\cE(v,u)$ with $|\cE(v,u)| < \eps d$. 
					\end{itemize}
				\end{itemize}
				We claim that the above process removes at most a 
				\[ 2\ell^L \eps + (1 - \alpha^*) + 2(1 - \beta)\]
				fraction of
				edges from $H$.  By the definition of $(\alpha_0, \ell,L)$-legit, there are at most $(1 - \alpha^*) + 2(1 - \beta)$ fraction removed in the first step.  To analyze the second step, 
				call a good vertex $v \in V(H)$ \em active \em in a round if we remove $\cE(v,u)$.
				Each good vertex is active at most $\ell^L$ times, because there are at most $\ell^L$ nontrivial classes $\cE(v,u)$ for every $v$ (and we have already removed all of the trivial classes in the first step).
				At each good vertex, every time it is active, we delete at most $\eps d$ edges.  Thus, we have deleted a total of at most
				\[  n \cdot \ell^L \cdot \eps d \]
				edges, and this proves the claim.  
			Finally, we observe that our choice of $\alpha^*$ and $\beta$ in \eqref{eq:alpha} and \eqref{eq:beta} respectively implies that $(1 - \alpha^*) + 2(1 - \beta) \leq \eps \ell^L$.  
Since the remaining edges belong to classes of size at least $\eps d$, this proves the lemma.
			\end{proof}

			A basic fact about expanders is that if a subset $S$ of vertices has a significant fraction of its edges contained in $S$, 
			then $S$ itself must be large.  This is formalized in Lemma \ref{lem:expand}.

			\begin{lemma}
				\label{lem:expand}
				Let $H$ be a $d$-regular expander graph with normalized second eigenvalue $\lambda$.
				Let $S \subset V(H)$ with $|S| < (\eps - \lambda)n$, and $F \subset E(H)$ so that for all $v \in S$, 
				\[ |\inset{ e \in F : e \text{ is adjacent to } v }| \geq \eps d. \]
				Then
				\[ |\Gamma_F(S)| > |S|. \]
			\end{lemma}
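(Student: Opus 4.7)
The plan is to apply the expander mixing lemma to the set $S$ and its $F$-neighborhood $T := \Gamma_F(S)$. Recall that for a $d$-regular graph on $n$ vertices with normalized second eigenvalue $\lambda$, the expander mixing lemma gives, for any sets $S, T \subset V(H)$,
\[
\left| e(S,T) - \frac{d|S||T|}{n} \right| \le \lambda d \sqrt{|S||T|},
\]
where $e(S,T) := \sum_{v \in S}|\{u \in T : (v,u) \in E(H)\}|$ counts ordered pairs of adjacent vertices with first coordinate in $S$ and second in $T$ (so edges internal to $S \cap T$ are counted twice). First I would verify that $e(S,T) \ge \eps d |S|$: by hypothesis each $v \in S$ has at least $\eps d$ incident edges in $F$, and by the definition of $\Gamma_F(S)$ every such edge lands in $T$, so summing over $v \in S$ gives the bound.

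Combining the upper bound from the expander mixing lemma with this lower bound yields
\[
\eps d |S| \;\le\; \frac{d|S||T|}{n} + \lambda d \sqrt{|S||T|}.
\]
I would then argue by contradiction: suppose $|T| \le |S|$. Then $|T|/n \le |S|/n$ and $\sqrt{|S||T|} \le |S|$, so dividing through by $d|S|$ gives $\eps \le |S|/n + \lambda$, i.e.\ $|S| \ge (\eps - \lambda)n$, contradicting the hypothesis $|S| < (\eps - \lambda)n$. Hence $|T| > |S|$, as required.

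I do not anticipate a real obstacle here: the lemma is essentially a direct application of the expander mixing lemma to the $F$-neighborhood. The only mild subtlety is adopting an edge-counting convention that keeps the lower bound $e(S,T) \ge \eps d |S|$ valid when $S$ and $T$ overlap, which the ordered-pair convention above handles cleanly.
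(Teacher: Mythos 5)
Your proof is correct and follows essentially the same route as the paper's: apply the expander mixing lemma to $S$ and $T = \Gamma_F(S)$, lower-bound $e(S,T)$ by $\eps d|S|$ using the hypothesis, and conclude $|T|>|S|$ when $|S|<(\eps-\lambda)n$. The only cosmetic difference is in the final algebra --- the paper bounds $\sqrt{|S||T|}\le(|S|+|T|)/2$ and rearranges directly, whereas you argue by contradiction using $\sqrt{|S||T|}\le|S|$; both yield the same inequality.
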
 

			\begin{proof} 
				The proof follows from the expander mixing lemma.  Let $T = \Gamma_F(S)$.  Then 
				\begin{align*} 
					\eps d |S| &\leq E(S,T)\\
					&\leq \frac{d|S||T|}{n} + d\lambda \sqrt{ |S||T| } \\
					&\leq \frac{d|S||T|}{n} + \frac{ d\lambda \inparen{ |S| + |T| } }{2}.
				\end{align*}
				Thus, we have
				\[ |T| \geq |S| \inparen{ \frac{ \eps - \lambda/2 } { |S|/n + \lambda/2 } }. \]
				In particular, as long as $|S| < n(\eps -\lambda)$, we have
				\[ |T| > |S|. \]
			\end{proof}

			\begin{lemma}[Expanders have large global equivalence classes]
				\label{lem:globalequivalence}
				If $(u,v)$ is sampled uniformly from $E(H)$, then
				\[
					\Pr_{(u,v)} \left[ | \cE^{H}_{(u,v)} | > \frac{nd}{2} \eps (\eps -\lambda) \right] > 1 - 3 \eps \ell^L
				\]
			\end{lemma}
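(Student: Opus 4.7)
The plan is to pass to the subgraph $H'$ from Lemma~\ref{lem:subgraph} and then exploit the expansion of $H$. Since $|E(H) \setminus E(H')| \leq 3\eps\ell^L (nd/2)$, a uniformly chosen edge of $H$ lies in $E(H')$ with probability at least $1 - 3\eps\ell^L$, so it suffices to prove $|\cE^H_{(u,v)}| > (nd/2)\,\eps(\eps - \lambda)$ whenever $(u,v) \in E(H')$.

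Fix such an edge $(u,v)$, set $F := \cE^H_{(u,v)} \cap E(H')$, and let $S$ be the set of endpoints of $F$. The key structural claim is that $\Gamma_F(S) \subseteq S$; equivalently, whenever $(w,w') \in F$, the entire intersection $\cE(w,w') \cap E(H')$ lies in $F$. To check this, fix $(w,w'') \in \cE(w,w') \cap E(H')$. Then $(w'',w) \in \cE(w,w')$, which witnesses $(w,w'') \sim (w,w')$ via the length-$2$ path $w'' \to w \to w'$ (whose single ``local'' check is $(w'',w) \in \cE(w,w')$, exactly our hypothesis). By transitivity $(w,w'') \sim (u,v)$, so $(w,w'') \in F$ and $w'' \in S$. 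Moreover, Lemma~\ref{lem:subgraph} guarantees $|\cE(w,w') \cap E(H')| \geq \eps d$, so every $w \in S$ is incident to at least $\eps d$ edges of $F$.

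Now I apply Lemma~\ref{lem:expand} contrapositively: if $|S| < (\eps - \lambda)n$, the lemma forces $|\Gamma_F(S)| > |S|$, contradicting $\Gamma_F(S) \subseteq S$. Hence $|S| \geq (\eps - \lambda)n$. Double-counting endpoint incidences (every edge of $F$ has both endpoints in $S$, while each vertex of $S$ contributes at least $\eps d$ incidences) then gives
\[
  |\cE^H_{(u,v)}| \;\geq\; |F| \;\geq\; \frac{|S|\,\eps d}{2} \;\geq\; \frac{nd}{2}\,\eps(\eps - \lambda),
\]
which is the desired bound.

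The only delicate step is the closure claim $\Gamma_F(S) \subseteq S$, which requires carefully unwinding the definition of global equivalence through the local relation at the shared middle vertex. I expect the main subtlety to be ensuring every edge along the propagation path is ``good'' as required by the definition of $\sim$; but since $E(H')$ was constructed by first removing all bad edges and all edges incident to bad vertices, every edge of $F$ is automatically good, so this is handled by passing to $H'$ at the start. The rest is a direct assembly of Lemmas~\ref{lem:subgraph} and~\ref{lem:expand}.
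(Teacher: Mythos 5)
Your proof is correct and takes the same route as the paper's: pass to $H'$ from Lemma~\ref{lem:subgraph}, apply Lemma~\ref{lem:expand} to the endpoint set of the global equivalence class restricted to $H'$, and double-count. One small imprecision in the write-up: $\Gamma_F(S) \subseteq S$ holds trivially because $S$ is defined as the set of all endpoints of $F$, so it is not ``equivalent'' to the closure claim; the closure claim (that $\cE(w,w') \cap E(H') \subseteq F$ for every $(w,w') \in F$) is instead what you need to guarantee every $v \in S$ is incident to at least $\eps d$ edges of $F$ --- the hypothesis of Lemma~\ref{lem:expand} --- and you do correctly establish and use it for exactly that purpose.
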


			\begin{proof}
From Lemma \ref{lem:subgraph}, there is a subgraph $H' \subset H$ such that $|E(H')| > |E(H)|(1 - 3\eps \ell^L)$.
Let $(u,v) \in E(H')$, and consider $\cE^{H'}_{(u,v)}$.  Let $S$ be the set of vertices in $H'$ that are adjacent to an edge in $\cE^{H'}_{(u,v)}$, 
				\ie
				\[
					S = \{ w \in V(H') | (w,z) \in \cE^{H'}_{(u,v)} \mbox{ for some } z \in V(H') \}
				\]
				Since every local equivalence class in $H'$ is of sized at least $\eps d$, then every vertex in $S$ has at least 
				$\eps d$ edges in $\cE^{H'}_{(u,v)}$.  By definition of $S$, for every edge in $\cE^{H'}_{(u,v)}$ both its endpoints are in $S$
				and $\Gamma_{\cE^{H'}_{(u,v)}} (S) = S$.  Thus by Lemma \ref{lem:expand}, it must be that $|S| \ge n(\eps - \lambda)$.
				Thus 
				\[
					|\cE^{H'}_{(u,v)}| \ge \frac{\eps d}{2} |S| = \frac{nd}{2} \eps (\eps - \lambda)
				\]			
				Then any edge in $H'$ is contained in an equivalence class of size at least $\frac{nd}{2} \eps(\eps - \lambda)$, 
				and the result follows form the fact that $|H'| > (1- 3 \eps \ell^L) |H|$.
			\end{proof}

Finally, we are in a position to prove that Algorithm \ref{algo:partial} does what it's supposed to.
			\begin{lemma} 
				\label{lem:partial}
				Algorithm \ref{algo:partial} produces a list of at most $\ell$ partial assignments $x^{(\sigma)}$ 
				so that 
				\begin{enumerate}
					\item
						Each of these partial assignments assigns values to the same set.  Further, this set is the global equivalence class $\cE^{H}_{(v,u)}$, where $(v,u)$ is the initial edge given as input 
						to Algorithm \ref{algo:partial}.
					\item
						For at least $(1-3 \eps \ell^L)$ fraction of initial edges $(v,u)$ we have 
						$|\cE^{H}_{(v,u)}| \ge \eps(\eps - \lambda) |E(H)|$
					\item
						Algorithm \ref{algo:partial} can be implemented so that the running time is $O_{\ell,L}(|\cE^{H}_{(v,u)}|)$.
				\end{enumerate}
			\end{lemma}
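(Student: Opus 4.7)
The plan is to verify the three claims in turn; each is a short argument once the correspondence between Algorithm \ref{algo:partial}'s BFS and the definition of $\cE^H_{(v,u)}$ is pinned down.

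For claim (1), I would prove by induction on the BFS round $t$ that the set of edges on which $\UI{x}{\sigma}$ is defined after round $t$ is precisely the set of edges reachable from $(v,u)$ by a local-equivalence chain of length at most $t+1$, in the sense of the definition of $\sim$. The initialization assigns $\cE(v,u) \cup \cE(u,v)$, which are exactly the edges one step away. In the inductive step, line \ref{line:assign} assigns $\UI{x}{\sigma}_{(b,c)}$ whenever $(a,b)$ was previously assigned, $a \in W_t$, and $(b,c) \in \cE(b,a)$; this extends a chain for $(a,b)$ by one step via $w_i = a,\, w_{i+1} = b,\, w_{i+2} = c$, and conversely every such chain is traversed by the BFS. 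Bad vertices never cause propagation outside $\cE^H_{(v,u)}$ because by convention all their local equivalence classes are trivial, so the definitional requirement that intermediate edges be good is satisfied automatically. Finally, the assigned set is independent of $\sigma$ because the local equivalence relations are.

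Claim (2) is immediate from Lemma \ref{lem:globalequivalence}: for a uniformly random starting edge we have $|\cE^H_{(u,v)}| > \tfrac{nd}{2}\eps(\eps-\lambda)$ with probability at least $1-3\eps\ell^L$, and $|E(H)| = nd/2$ for a $d$-regular graph on $n$ vertices.

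For claim (3), I would implement the BFS with a boolean flag per edge recording whether $\UI{x}{\sigma}$ is set, and with a set representation of $W_{t+1}$ so that each vertex is processed at most once per $\sigma$. Each processed vertex $a$ does $O(d)$ work to scan $\Gamma(a)$, plus an additional $O(d)$ for each iteration over a local class $\cE(b,a)$, for $O(d^2)$ work per processed vertex. Summing over the at most $2|\cE^H_{(v,u)}|$ vertices incident to $\cE^H_{(v,u)}$ and the at most $\ell$ choices of $\sigma$ yields total running time $O(\ell d^2 |\cE^H_{(v,u)}|) = O_{\ell,L}(|\cE^H_{(v,u)}|)$ after absorbing $d$ into the implicit constant. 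The main obstacle is the bookkeeping in claim (1): the chain defining $\sim$ uses $\cE(w_{i+1}, w_{i+2})$, so the ``perspective vertex'' alternates along the chain, and one must verify that propagating from a newly-assigned edge $(a,b)$ via $\cE(b,a)$ (the local class at $b$) reproduces exactly this alternation. Once that correspondence is established, the remainder of the proof is routine.
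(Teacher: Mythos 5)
Your proof is correct and follows essentially the same approach as the paper: claim (1) by arguing the BFS reaches exactly the edges joined to the start by a local-equivalence chain, claim (2) by direct appeal to Lemma~\ref{lem:globalequivalence} (together with $|E(H)|=nd/2$), and claim (3) by bounding the total work and absorbing $d$ into the hidden constant. Your extra care in claim (1)---the remark that the BFS step from $(a,b)$ via $\cE(b,a)$ matches the alternating perspective vertex in the definition of $\sim$, and the note that bad vertices only have trivial local classes---fills in details the paper's proof states more tersely, and in claim (3) your vertex-based accounting ($O(d^2)$ per processed vertex) differs cosmetically from the paper's edge-based accounting (each edge assigned at most twice) but yields the same bound.
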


			\begin{proof}
				For the first point, notice that at each $t$ in
Algorithm \ref{algo:partial}, the algorithm looks at each vertex it visited in
the last round (the set $W_{t-1}$) and assigns values to all edges in the
(local) equivalence classes $\cE(v,u)$ for all $v \in W_{t-1}$ for which at
least one other value in $\cE(v,u)$ was known.  Thus if there is a path of
length $p$ from $(v,u)$ to $(z,w)$ walking along (local) equivalence classes,
the edge $(z,w)$ will get assigned by Algorithm \ref{algo:partial} in at most
$p$ steps.

For the second point, by Lemma \ref{lem:globalequivalence} for at least a
$(1-2\eps \ell^L)$ fraction of the initial starting edges, $(v,u)$ we have
$|\cE^{H}_{(v,u)}| \ge \eps(\eps - \lambda) |E(H)|$.

Finally, we remark on running time.  Since each edge is only in two (local)
equivalence classes (one for each vertex), it can only be assigned twice during
the running of the algorithm (Algorithm \ref{algo:partial} line
\ref{line:assign}).  Since each edge can only be assigned twice, the total
running time of the algorithm will be $O(|\cE^{H}_{(u,v)}|)$.
\end{proof}

\subsection{Turning partial assignments into full assignments}\label{ssec:together}
Given lists $\{\cL_e\}_{e \in E(H)}$, Algorithm~\ref{algo:main} runs the local recovery algorithm at each good vertex $v$ to obtain $(\alpha_0, \ell,L)$-legit sets $\cS_v$.  Then 
Given $(\alpha_0,\ell,L)$-legit sets $\cS_v$, Algorithm~\ref{algo:partial} can find $\ell$ partial codewords, defined on $\cE^H_{(u,v)}$.  
			To turn this into a full list recovery algorithm, we simply need to run Algorithm \ref{algo:partial} multiple times, obtaining partial 
			assignments on disjoint equivalence classes, and then stitch these partial assignments together.  If we run Algorithm \ref{algo:partial} 
			$t$ times, then stitching the lists together we will obtain at most $\ell^t$ possible codewords; this will give us our final list of size $L'$.
			This process is formalized in Algorithm \ref{algo:main}.

			\begin{algorithm}
				\KwIn{ A collection of lists $\cL_e \subset \F_q$ for at least a $\alpha^*$ fraction of the $e \in E(H)$, $|\cL_e| \leq \ell$.}
				\KwOut{  A list of $\cL'$ assignments $c \in \cC$ that are consistent with all of the lists $\cL_e$. }

				Divide the vertices and edges of $H$ into good and bad vertices and edges, as per Section~\ref{ssec:recover}.
				Run the list recovery algorithm of $\cC_0$ at each good vertex $v \in V(H)$ on the lists
				$\inset{ \cL_{(v,u)} : u \in \Gamma(v) }$
				to obtain $(\alpha_0, \ell, L)$-legit lists $\cS_v \subset \cC_0$.  

				Initialize the set of \texttt{unassigned} edges $\cU = E(H)$.

				Initialize the set of \texttt{bad} edges $\cB$ to be the bad edges along with the edges adjacent to bad vertices.

				Initialize $\mathcal{T} = \emptyset$.

				\For {$t=1,2,\ldots$ }
				{
					\If{ $\cU \subset \cB$ }{break}

					Choose an edge $(v,u) \gets \cU \setminus \cB$.
					
					Run Algorithm \ref{algo:partial} on the collection $\inset{\cS_a \suchthat a \in V(H)}$ and on the starting edge $(v,u)$.  
					This returns a list
					$x^{(t)}_1,\ldots, x^{(t)}_\ell$
					of assignments to the edges in $\cE^{H}_{(u,v)}$. \tcc*{Notice that the notation $x^{(t)}_j$ differs from that in Algorithm \ref{algo:partial}}
					
					\If {$|\cE^{H}_{(u,v)}| > \eps( \eps - \lambda) \frac{nd}{2}$}
						{Set $\cU = \cU \setminus \cE^{H}_{(u,v)}$ and set $\cT = \cT \cup \inset{t}$ }
					\Else{ $\cB = \cB \cup \cE^{H}_{(u,v)}$ }
				}

				\For{ $t \in \mathcal{T}$  and $j = 1,\ldots, \ell$}
				{
						Concatenate the (disjoint) assignments $\inset{ x^{(t)}_{j} \suchthat t \in \cT }$ to obtain an assignment $x$.
						
						Run the unique decoding algorithm for erasures (as in Lemma~\ref{thm:unique}) for the expander code to correct the partial assignment $x$ to a codeword $c \in \cC$.
						
						If $c$ agrees with the original lists $\cL_e$, add it to the output list $\cL'$.
				}
				Return $\cL' \subset \cC$.
				\caption{List recovery for expander codes} \label{algo:main}
			\end{algorithm}
The following theorem asserts that Algorithm~\ref{algo:main} works, as long as we can choose $\lambda$ and $\eps$ appropriately.
			\begin{theorem}\label{thm:itworks}
				Suppose that the inner code $\cC_0$ is $(\alpha_0, \ell, L)$-list recoverable with and 
				has distance $\delta_0$.  Let $\alpha \geq \alpha^*$ as in Equation~\eqref{eq:alpha}.
				Choose $k > 0$ so that $\lambda < \delta_0 - \frac{2}{k}$, 
				and set $\epsilon = \frac{\delta_0}{2k \ell^L}$.  Then
				Algorithm \ref{algo:main} returns a list of at most
				\[ 
					L' = \ell^{ \frac{1}{\eps(\eps - \lambda) } }
				\]
				codewords of $\cC$.
				Further, this list contains every codeword consistent with the lists $\cL_e$.  In particular, $\cC$ is $(\alpha, \ell, L')$-list recoverable from erasures.

				The running time of Algorithm \ref{algo:main} is $O_{\ell,L,\eps}(nd).$
			\end{theorem}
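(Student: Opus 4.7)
The plan is to analyze Algorithm~\ref{algo:main} by tracking three quantities across its iterations: the number $|\cT|$ of successful partial-recovery calls, the eventual size of the ``bad'' set $\cB$, and the resulting output list $\cL'$. By Lemma~\ref{lem:partial}, each successful iteration (those added to $\cT$) returns at most $\ell$ partial assignments on a global equivalence class $\cE^{H}_{(u,v)}$ of size at least $\eps(\eps - \lambda)\, nd/2$. Since we remove $\cE^{H}_{(u,v)}$ from the set $\cU$ of unassigned edges after each success, the classes produced across iterations are pairwise disjoint, and therefore $|\cT| \leq \frac{2}{\eps(\eps-\lambda)}$. Concatenating the $\ell$ options from each successful iteration gives at most $\ell^{|\cT|} \leq \ell^{1/(\eps(\eps-\lambda))} = L'$ candidate partial assignments defined on $E(H) \setminus \cB$.

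The main technical step is bounding the fraction $|\cB|/|E(H)|$ so that the erasure decoder of Lemma~\ref{thm:unique} can complete each of these candidates. The set $\cB$ grows from two sources: the initial bad edges and edges touching bad vertices, contributing at most $(1-\alpha^*) + 2(1-\beta) \leq \eps\ell^L$ by the choices in~\eqref{eq:alpha} and~\eqref{eq:beta}; and the edges whose global equivalence classes fall below the threshold $\eps(\eps-\lambda)nd/2$, which by the proof of Lemma~\ref{lem:subgraph} together with Lemma~\ref{lem:globalequivalence} account for at most a further $O(\eps\ell^L)$ fraction of edges. With $\eps = \delta_0/(2k\ell^L)$ the total is $O(\delta_0/k)$, in particular at most $\delta_0/k$ with room to spare, and since $\lambda < \delta_0 - 2/k$ by assumption, Lemma~\ref{thm:unique} applies and extends every candidate partial assignment to a codeword in $\cC$ in linear time.

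For completeness, fix any $c \in \cC$ consistent with the lists $\cL_e$. Because $c|_{\Gamma(v)} \in \cC_0$ and agrees with the good lists incident to each good vertex $v$, the legit lists $\cS_v, \cS_u$ produced by local list recovery both contain $c|_{\Gamma(v)}$ and $c|_{\Gamma(u)}$ respectively, so the true value $c_{(u,v)}$ lies in $\LC(\cS_v)_u \cap \LC(\cS_u)_v$. Hence at every successful iteration $t$, one of the $\ell$ partial assignments $x^{(t)}_j$ agrees with $c$ on $\cE^{H}_{(u,v)}$; picking that branch at every $t \in \cT$ yields one of the $\ell^{|\cT|}$ concatenated candidates that matches $c$ on $E(H) \setminus \cB$, and the erasure decoder then reconstructs $c$ exactly. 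The final filtering step retains $c$ because it is consistent with $\cL_e$ by assumption.

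The running-time bookkeeping is then routine. The local list recovery at good vertices costs $O_{\ell, L, d}(1)$ per vertex, hence $O(n)$ overall; the partial-recovery calls are charged proportionally to the equivalence classes they discover by Lemma~\ref{lem:partial}, and since these are disjoint the total cost across $|\cT|$ iterations is $O_{\ell,L}(nd)$; and each of the $L'$ erasure decodings costs $O(nd)$, giving a total of $O_{\ell,L,\eps}(nd)$. I expect the chief obstacle to be the second paragraph's accounting, as the choice of $\eps$ must simultaneously be small enough to force the bad-edge fraction below $\delta_0/k$ and large enough that global equivalence classes found by Algorithm~\ref{algo:partial} still cover a constant fraction of $E(H)$; this is precisely the compromise enforced by $\eps = \delta_0/(2k\ell^L)$.
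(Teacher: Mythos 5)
Your proposal follows the same route as the paper: bound the number $|\cT|$ of successful partial-recovery calls via disjointness, bound the bad-edge fraction so Lemma~\ref{thm:unique} can complete each candidate, verify completeness by tracking a fixed consistent codeword through the branching, and account running time by charging Algorithm~\ref{algo:partial} calls to the disjoint equivalence classes. Two arithmetic points deserve attention.

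First, the chain from $|\cT|$ to $L'$ does not close as written. You state $|\cT| \leq \frac{2}{\eps(\eps-\lambda)}$ and then conclude $\ell^{|\cT|} \leq \ell^{1/(\eps(\eps-\lambda))}$, which does not follow. Since $|E(H)| = nd/2$ for a $d$-regular graph and each class counted in $\cT$ has size at least $\eps(\eps-\lambda)nd/2$, disjointness gives $|\cT| \leq \frac{1}{\eps(\eps-\lambda)}$, and then $\ell^{|\cT|} \leq L'$ follows directly. (The factor of $2$ you introduced looks like a miscount of $|E(H)|$.)

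Second, the claim that the bad-edge fraction lands below $\delta_0/k$ \emph{with room to spare} is not true at the stated value $\eps = \delta_0/(2k\ell^L)$: Lemma~\ref{lem:globalequivalence} gives a bad-edge fraction of $3\eps\ell^L = \frac{3\delta_0}{2k}$, which is \emph{larger} than $\delta_0/k$. The $O(\cdot)$ notation you use hides exactly the constant that matters here, and the hypothesis of Lemma~\ref{thm:unique} is not obviously met. (This is in fact an inconsistency in the paper itself: the proof of Theorem~\ref{thm:main} later overrides this and uses $\eps = \delta_0/(3k\ell^L)$, which makes $3\eps\ell^L = \delta_0/k$; the statement of Theorem~\ref{thm:itworks} should presumably read $\eps = \delta_0/(3k\ell^L)$.) You should recompute this bound exactly rather than leaving it in big-O form, since the whole argument hinges on it. One further minor point: your accounting treats the initial bad-edge fraction and the small-class fraction as separate additive contributions, but Lemma~\ref{lem:globalequivalence} already folds the former into its $3\eps\ell^L$ bound, so the two sources should not be added; the paper uses the single bound from Lemma~\ref{lem:globalequivalence} directly. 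Your completeness paragraph is more explicit than the paper's and is correct.
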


			\begin{proof}[Proof of Theorem \ref{thm:itworks}]
First, we verify the list size. 
				For each $t \in \cT$, Algorithm
\ref{algo:partial} covered at least a $\eps(\eps - \lambda)$ fraction of the
edges, so $|\cT| \leq \frac{1}{\eps(\eps - \lambda)}$.  Thus the number of
possible partial assignments $x$ is at most \[ \ell^{|\cT|} \leq \ell^{
\frac{1}{\eps(\eps - \lambda)}}. \] 

				Next, we verify correctness.  By Lemma
\ref{lem:globalequivalence}, at least a $1 - 3 \eps \ell^L$ fraction of the
edges are in equivalence classes of size at least $(nd/2) \eps (\eps -
\lambda)$.  Thus at the end of Algorithm \ref{algo:main} at most $3\eps \ell^L$
vertices are uncorrected.  By Lemma~\ref{thm:unique}, we can correct these
erasures in linear time this as long as $3\eps \ell^L < \delta_0/k$ (which was
our choice of $\eps$), and as long as $\lambda < \delta_0 - \frac{2}{k}$ (which
was our assumption).   Thus, Algorithm~\ref{algo:main} can uniquely complete
all of its partial assignments.  Since any codeword $c \in \cC$ which agrees
with all of the lists agrees with at least one of the partial assignments $x$,
we have found them all.

				Finally, we consider runtime.  As a
pre-processing step, Algorithm \ref{algo:main} takes $O( T_d n )$ steps to run
the inner list recovery algorithm at each vertex, where $T_d \leq O(d \ell
|\cC_0|)$ is the time it takes to list recover the inner code
$\cC_0$.\footnote{ Because $d$ is constant, we can write $T_d = O(1)$, but it
may be that $d$ is large and that there are algorithms for the inner code that
are better than brute force. } It takes another $O(dn)$ steps of preprocessing
to set up the appropriate graph data structures.  Now we come to the first
loop, over $t$.  By Lemma \ref{lem:partial}, the equivalence classes
$\cE^H_{(u,v)}$ form a partition of the edges, and at least a $(1 - 3 \eps
\ell^L)$ fraction of the edges are in parts of size at least $\frac{nd}{2} \eps (\eps -
\lambda)$.  By construction, we encounter each class only once; because the
running time of Algorithm \ref{algo:partial} is linear in the size of the part,
the total running time of this loop is $O(dn)$.  Finally, we loop through and
output the final list, which takes time $O(L' dn)$, using the fact
(Lemma~\ref{thm:unique}) that the unique decoder for expander codes runs in
linear time.
			\end{proof}

Finally, we pick parameters and show how Theorem~\ref{thm:itworks} implies Theorem~\ref{thm:main}.
\begin{proof}[Proof of Theorem \ref{thm:main} ]
	Theorem \ref{thm:main} requires choosing appropriate parameters to instantiate Algorithm \ref{algo:main}.
				In order to apply Theorem~\ref{thm:itworks}, we choose
				\[
					k > \frac{2}{\delta_0 - \lambda} > 0 \qquad \text{and} \qquad
					\eps = \frac{\delta_0}{3k \ell^L} < \frac{\delta_0}{ 3\inparen{ \frac{2}{\delta_0 - \lambda} } \ell^L } = \frac{\delta_0 (\delta_0 - \lambda) }{6 \ell^L }.
				\]
				This ensures that the hypotheses of Theorem~\ref{thm:itworks} are satisfied.  The assumption that $\lambda < \delta_0^2/(12\ell^L)$ and the bound on $\eps$ implies that
$\eps - \lambda > \eps/2.$
Thus, the conclusion of Theorem \ref{thm:itworks} about the list size reads
\[ L' \leq \exp_\ell\inparen{\frac{1}{\eps(\eps - \lambda)}} \leq \exp_\ell\inparen{ \frac{2}{\eps^2} } \leq \exp_\ell\inparen{ \frac{ 72 \ell^{2L} }{ \delta_0^2 (\delta_0 - \lambda)^2 }}. \]
The definition of $\alpha^*$ from \eqref{eq:alpha} becomes
\[
\alpha^* = 1 - \eps \ell^L\inparen{ \frac{1 - \alpha_0}{2 - \alpha_0} } \leq 1 - \frac{ \delta_0(\delta_0 - \lambda)}{6} \inparen{ \frac{1 - \alpha_0}{2 -\alpha_0}},
\]
which implies the claim about $\alpha$.
Along with the statement about running time from Theorem~\ref{thm:itworks}, this completes the proof of Theorem~\ref{thm:main}.
			\end{proof}

\section{Conclusion and open questions} 

We have shown that expander codes, properly instantiated, are high-rate list recoverable codes with constant list size and constant alphabet size, which can be list recovered in linear time.  
To the best of our knowledge, no such construction was known.

Our work leaves several open questions.  Most notably, our algorithm can handle \em
erasures, \em but it seems much more difficult to handle errors.  As mentioned above, handling
list recovery from errors would 
 open the door for many of the applications
of list recoverable codes, to list-decoding and other areas. 
Extending our results to errors with linear-time recovery would be most interesting, as it
would immediately lead to optimal linear-time list-decodable codes.  However, even
polynomial-time recovery would be interesting: in addition to given a new, very different family of efficient locally-decodable codes, this could lead to explicit (uniformly constructive), efficiently list-decodable codes with constant list size
and constant alphabet size, which is (to the best of our knowledge) currently an open problem.

Second, the parameters of our construction could be improved: our choice of inner code (a random
linear code), and its analysis, is clearly suboptimal.  Our construction would
have better performance with a better inner code.  As mentioned in
Remark~\ref{rem:inner}, we would need a high-rate linear code which is list
recoverable with constant list-size (the reason that this is not begging the
question is that this inner code need not have a fast recovery algorithm).  We
are not aware of any such constructions.

\section*{Acknowledgments} We thank Venkat Guruswami for raising the question of obtaining high-rate linear-time list-recoverable codes, and for very helpful conversations.  We also thank Or Meir for pointing out~\cite{or}.

\bibliographystyle{alpha}
\bibliography{lr.bib}

\appendix

	\section{Linear-time unique decoding from erasures}\label{sec:erasures}

		In this appendix, we include (for completeness) the algorithm for uniquely decoding an expander code from erasures, and a proof that it works.
		Suppose $\cC$ is an expander code created from a $d$-regular graph $H$ and inner code $\cC_0$ of length $d$, so that the inner code $\cC_0$ can be corrected from an $\delta_0 d$ erasures.
		
		\begin{algorithm}
			\KwIn{Input: A vector $w \in \inset{ \F_q \cup \bot }^{|E(H)|}$}
			\KwOut{Output: A codeword $c \in \cC$}
			
			Initialize $\cB_0 = V(H)$	

			\For{ $t = 1,2,\ldots$ \label{line:outerloop} }{
				\If{ $\cB_{t-1} = \emptyset$ }{ \texttt{Break} }
				$\cB_{t} = \emptyset$ \\
				\For{ $v \in \cB_{t-1}$ \label{line:innerloop}}
					{
						\If{ $| \inset{ u \suchthat  (u,v) \in E(H),  w_{(u,v)} = \bot } | < \delta_0 d $ }{ 
							Correct all $(u,v) \in \Gamma(v)$ using the local erasure recovery algorithm 
						}
						\Else{ $\cB_{t} = \cB_{t} \cup \{v\}$ }
					}
			}
			Return the corrected word $c$.
			\caption{A linear time algorithm for erasure recovery }\label{algo:erasurerecovery}
		\end{algorithm}		

		\begin{lemma}[Restatement of Lemma~\ref{thm:unique}]
			If $\cC_0$ is a linear code of block length $d$ that can recover from an $\delta_0 d$ number of erasures, 
			and $H$ is a $d$-regular expander with normalized second eigenvalue $\lambda$, 
			then the expander code $\cC$ can be recovered from a $\frac{\delta_0}{k}$ fraction of erasures in linear time using
			Algorithm \ref{algo:erasurerecovery} whenever $\lambda < \delta_0 - \frac{2}{k}$.
		\end{lemma}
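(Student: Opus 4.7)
My plan is to analyze Algorithm~\ref{algo:erasurerecovery} by tracking $\cB_t$, the set of uncorrected vertices at the end of round $t$, and to show that $|\cB_t|$ contracts geometrically until it becomes empty. The key structural observation is that whenever the algorithm corrects a vertex, all of its incident edges get assigned values, so the set $F_t$ of edges that are still erased at the end of round $t$ lies entirely inside the induced subgraph on $\cB_t$. This is what lets us apply expander bounds to $\cB_t$.

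The first quantitative step is a local counting argument. Let $e_t(v)$ denote $v$'s erased-degree at the start of round $t+1$; then $2|F_t| = \sum_{v \in \cB_t} e_t(v)$. For any $v \in \cB_{t+1}$, the vertex $v$ was examined during round $t+1$ and not corrected, so at the moment of examination it had at least $\delta_0 d$ erased incident edges; since within-round corrections of other vertices only decrease neighbors' erased-degrees, we also have $e_t(v) \geq \delta_0 d$. Hence $\delta_0 d \,|\cB_{t+1}| \leq 2|F_t|$. Because the total number of erased edges never grows, $|F_t| \leq (\delta_0/k)(nd/2)$, which gives the a~priori bound $|\cB_t| \leq n/k$ for every $t \geq 1$.

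The second step is the expansion bound. Applying the expander mixing lemma to $S = T = \cB_t$ yields $2|F_t| \leq e(\cB_t,\cB_t) \leq d|\cB_t|^2/n + \lambda d|\cB_t|$. Combining this with $2|F_t| \geq \delta_0 d\,|\cB_{t+1}|$ and with the a~priori bound $|\cB_t|/n \leq 1/k$ from the previous step yields
\[
|\cB_{t+1}| \;\leq\; \frac{|\cB_t|}{\delta_0}\Bigl(\frac{|\cB_t|}{n}+\lambda\Bigr) \;\leq\; \frac{1/k+\lambda}{\delta_0}\,|\cB_t|.
\]
The hypothesis $\lambda < \delta_0 - 2/k$ makes the contraction ratio $c := (1/k+\lambda)/\delta_0$ strictly less than $1 - \tfrac{1}{k\delta_0}$, so $|\cB_t|$ decays geometrically and eventually becomes zero. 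For the running time, each round does $O(d)$ work per vertex of $\cB_{t-1}$, so the total work is $O\bigl(d\,\bigl(|\cB_0| + \sum_{t \geq 1} |\cB_t|\bigr)\bigr) \leq O\bigl(d(n + (n/k)/(1-c))\bigr) = O(nd)$, linear in the block length.

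The one subtlety to handle carefully is the sequential nature of the inner \textbf{for} loop: the condition ``$v \in \cB_{t+1}$'' is defined by $v$'s erased-count at the instant it was examined, not at the end of the round. The workaround sketched above is to pass to the start-of-round state, which always upper-bounds the examination-time state because within-round corrections only decrease neighbors' counts; this gives a valid upper bound on $|\cB_{t+1}|$ and preserves the contraction argument.
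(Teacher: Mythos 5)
Your proof is correct and follows the same high-level plan as the paper's: show that the set $\cB_t$ of still-uncorrected vertices contracts geometrically via the expander mixing lemma, using an a priori cap on $|\cB_t|$ (for $t\ge 1$) coming from the bound on the initial number of erasures, then sum a geometric series for the running time. The one technical difference is in how the mixing lemma is deployed. The paper applies it to the bipartite pair $(\cB_{t-1},\cB_t)$, lower-bounding $E(\cB_{t-1},\cB_t)$ by $\delta_0 d|\cB_t|$; this produces a recursion with a $\sqrt{|\cB_{t-1}|/|\cB_t|}$ term that must then be rearranged into $|\cB_t|\le \lambda^2|\cB_{t-1}|/(\delta_0-2/k)^2$. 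You instead apply it with $S=T=\cB_t$, using the observation that all erased edges $F_t$ at the end of a round lie in the induced subgraph on $\cB_t$ so that $2|F_t|\le E(\cB_t,\cB_t)$, together with $\delta_0 d|\cB_{t+1}|\le 2|F_t|$. This gives the linear recursion $|\cB_{t+1}|\le \bigl(1/k+\lambda\bigr)|\cB_t|/\delta_0$ directly, with no square-root to deal with, and the contraction condition $\lambda<\delta_0-1/k$ is weaker (hence implied by the stated $\lambda<\delta_0-2/k$). Your explicit treatment of the within-round ordering subtlety — passing to start-of-round erased-degrees, which only overestimate examination-time degrees — is a point the paper glosses over; it is correct and worth spelling out. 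Both arguments are sound; yours is marginally cleaner algebraically.
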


		\begin{proof}
			Since there are at most $\frac{\delta_0}{k} |E(H)|$ erasures, at most $\frac{2}{k} |V(H)|$ of the nodes are adjacent 
			to at least $\delta_0 d$ erasures.  Thus $|\cB_1| \le \frac{2}{k} |V(H)|$.
			
			By the expander mixing lemma
			\begin{equation}
				\label{eqn:unknownUB}
				|E(\cB_{t-1},\cB_t)| \le \frac{d |\cB_{t-1}||\cB_t| }{n} + \lambda d \sqrt{|\cB_{t-1}|| \cB_t|}
			\end{equation}
			On the other hand, in iteration $t-1$ of the outer loop, every vertex in $\cB_{t}$ has at least $\delta_0 d$ unknown edges, 
			and these edges must connect to vertices in $\cB_{t-1}$ (since at step $t-1$ all vertices in $V(H) \setminus \cB_{t-1}$ are completely known).

			Thus
			\begin{equation}
				\label{eqn:unknownLB}
				|E(\cB_{t-1},\cB_t)| \ge \delta_0 d |\cB_t|
			\end{equation}

			Combining equations \ref{eqn:unknownUB} and \ref{eqn:unknownLB} we see that 
			\begin{align*}
				\delta_0 d |\cB_{t}| 	&\le \frac{d |\cB_{t-1}||\cB_t| }{|V(H)|} +  \lambda d \sqrt{ |\cB_{t-1}| |\cB_t| }\\
									&\Downarrow\\
						\delta_0  		&\le \frac{|\cB_{t-1}|}{|V(H)|} + \lambda \sqrt{ \frac{|\cB_{t-1}|}{|\cB_t|} } \\
									&\Downarrow\\
						|\cB_{t}|	&\le \frac{ \lambda^2 |\cB_{t-1}|}{\inparen{ \delta_0 - \frac{|\cB_{t-1}|}{|V(H)|} }^2 } \\
									&\Downarrow\\
						|\cB_{t}|	&\le \frac{ \lambda^2 |\cB_{t-1}|}{\inparen{ \delta_0 - \frac{2}{k} }^2 } \\
			\end{align*}

			Where the last line uses the fact that $\frac{2 |V(H)|}{k} \ge \cB_1 \ge \cB_{2} \ge \cdots$.
	
			Thus at each iteration, $\cB_t$ decreases by a multiplicative factor of $\inparen{ \frac{\lambda}{\delta_0 - \frac{2}{k}} }^2$.
			This is indeed a decrease as long as $\lambda < \delta_0 - \frac{2}{k}$.
			Since $|\cB_0| = |V(H)|$, after $T > \frac{ \log(2 |V(H)|) }{ 2 \log \inparen{ \frac{\delta_0 - \frac{2}{k}}{\lambda} } }$, we have $|\cB_T| < 1$.
			Thus the algorithm terminates after at most $T$ iterations of the outer loop.

			The total number of vertices visited is then 
			\begin{align*}
				\sum_{t=0}^T |\cB_t| 	&\le |V(H)| \cdot \sum_{t=0}^T \inparen{ \frac{ \lambda }{\delta_0 - \frac{2}{k} } }^{2t} \\
										&< |V(H)| \cdot \sum_{t=0}^\infty \inparen{ \frac{ \lambda }{\delta_0 -\frac{2}{k} } }^{2t } \\
										&= |V(H)| \cdot \frac{1}{1 - \frac{\lambda}{\delta_0 - \frac{2}{k} } }
										%&= |V(H)| \cdot \frac{ 1 - \inparen{ \frac{\lambda}{\delta_0 - \frac{2}{k} } }^{2T+1} }{1- \frac{\lambda}{\delta_0 - \frac{2}{k} } } \\
			\end{align*}

			Thus the algorithm runs in time $\bigoh(|V(H)|)$.
		\end{proof}

\section{List recovery capacity theorem}\label{sec:listreccap}
In this appendix, we prove an analog of the list decoding capacity theorem for list recovery.
\begin{theorem}[List recovery capacity theorem]
For every $R > 0$, and $L \geq \ell$, there is some code $\cC$ of rate $R$ over $\F_q$ which is $(R + \eta(\ell,L), \ell, L)$-list recoverable, for any
\[ \eta(\ell, L) \geq \frac{4\ell}{L} \qquad \text{ and } \qquad q \geq \ell^{2/\eta}. \]
Further, for any constants $\eta, R > 0$ and any $\ell$, any code of rate $R$ which is $(R - \eta, \ell, L)$-list recoverable must have $L = q^{\Omega(n)}$.
\end{theorem}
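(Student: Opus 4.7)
The plan is to prove both halves by standard probabilistic arguments, essentially lifting the classical list-decoding capacity theorem to the erasure list-recovery setting.

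For the \emph{existence} direction I will use a purely random code: let $\cC$ consist of $q^{Rn}$ codewords drawn independently and uniformly from $\F_q^n$. Fix a subset $T\subset[n]$ of size $\alpha n$ with $\alpha=R+\eta$ and sets $S_i\subset\F_q$ of size at most $\ell$ for $i\in T$; put $S_i=\F_q$ for $i\notin T$. For any single codeword $c$, $\PR{c\in S_1\times\cdots\times S_n}\le(\ell/q)^{\alpha n}$, and by independence any fixed $(L+1)$-tuple of codewords lies in the product set with probability at most $(\ell/q)^{\alpha n(L+1)}$. I then union-bound over the $\le 2^n$ choices of $T$, the $\le q^{\ell\alpha n}$ choices of $(S_i)_{i\in T}$, and the $\le q^{Rn(L+1)}$ choices of the tuple. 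Requiring the resulting product to be strictly less than one reduces, after taking $\log_q$, to a condition of the form
\[(L+1)\eta \;>\; \ell\alpha \;+\; \alpha(L+1)\log_q\ell \;+\; o(1),\]
which follows from the hypotheses $\eta\ge 4\ell/L$ and $q\ge\ell^{2/\eta}$ by routine arithmetic. A standard expurgation will then fix a single code achieving the claim.

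For the \emph{converse} I run the first-moment argument in the opposite direction to exhibit a configuration of lists that traps many codewords. Fix any subset $T\subset[n]$ of size $(R-\eta)n$, and for each $i\in T$ draw $S_i$ uniformly at random among size-$\ell$ subsets of $\F_q$; put $S_i=\F_q$ for $i\notin T$. For any fixed codeword $c\in\cC$, the probability that $c\in S_1\times\cdots\times S_n$ is exactly $(\ell/q)^{(R-\eta)n}$, so by linearity of expectation the number of codewords caught in the product has expectation
\[|\cC|\cdot(\ell/q)^{(R-\eta)n}\;=\;q^{Rn}\cdot(\ell/q)^{(R-\eta)n}\;=\;q^{\eta n}\cdot\ell^{(R-\eta)n}.\]
Hence some realization of $(S_i)_{i\in T}$ traps at least this many codewords, forcing $L\ge q^{\eta n}=q^{\Omega(n)}$ whenever $\eta$ is a positive constant.

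The main obstacle is the parameter bookkeeping in the existence direction: the union bound produces several lower-order terms involving $\log_q \ell$ and $\log_q 2$, and these must be absorbed into the slack $\eta$ between agreement and rate. This is exactly what the hypothesis $q\ge\ell^{2/\eta}$ is designed to do, giving $\log_q\ell\le\eta/2$ so that the $\alpha(L+1)\log_q\ell$ term is dominated by half of $(L+1)\eta$; the remaining $\ell\alpha$ term is then handled by $\eta\ge 4\ell/L$. The converse, by contrast, is essentially a one-line averaging argument once the random model for the $S_i$'s is in place.
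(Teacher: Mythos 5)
Your proof is correct and follows essentially the same route as the paper: a union bound over $(L+1)$-tuples of messages, index sets $T$, and lists $(S_i)$ for the existence direction, and a first-moment/averaging argument over random lists for the converse. One minor simplification you make is to fix $|T|=\alpha n$ rather than summing over all $|T|\geq\alpha n$ as the paper does; this is fine since enlarging $T$ can only shrink the set of trapped codewords, so the worst case is $|T|=\alpha n$. One small inaccuracy: the closing remark about ``a standard expurgation'' is not needed — the union bound already shows that with positive probability the random code is $(\alpha,\ell,L)$-list recoverable, so a suitable code exists outright without modifying the code afterward.
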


\begin{proof}
The proof follows that of the classical list-decoding capacity theorem.  For the first assertion, consider a random code $\cC$ of rate $R$, and set $\alpha = R + \eta$, for $\eta = \eta(\ell,L)$ as in the statement.  For any set of $L+1$ messages $\Lambda \subset \F^k$, and for any set $T \subset [n]$ of at most $\alpha n$ indices $i$, and for any lists $S_i$ of size $\ell$, the probability that all of the codewords $\cC(x)$ for $x \in \Lambda$ are covered by the $S_i$ is
\[ \PR{ \forall i \in T,x \in \Lambda, \cC(x)_i \in S_i} = \inparen{ \frac{\ell}{q} }^{|T|(L+1)}.\]
Taking the union bound over all choices of $\Lambda, T$, and $S_i$, we see that
\begin{align*}
\PR{ \cC \text{ is not $(\alpha, \ell, L)$-list recoverable } } &\leq 
{q^{Rn} \choose L+1} \inparen{ \sum_{k \geq \alpha n } {q \choose \ell}^k {n \choose k}\inparen{ \frac{\ell}{q}}^{k(L+1)}} \\
&\leq q^{Rn(L+1)} \inparen{ \sum_{k \geq \alpha n} \inparen{ \frac{ eq }{\ell}}^{\ell k} {n \choose k } \inparen{ \frac{\ell}{q}}^{k(L+1)}} \\
&\leq q^{Rn(L+1)} \inparen{ \sum_{k \geq \alpha n}{n \choose k} } \inparen{ \frac{\ell}{q} }^{\alpha n(L + 1 - \ell(1 - 1/\ln(q/\ell)))} \\
&\leq\exp_q \inparen{ n \inparen{ (R-\alpha)(L+1) + \alpha \inparen{ \ell + L\log(\ell)/\log(q) } + H(\alpha)/\log(q) } }\\
&\leq \exp_q \inparen{ n \inparen{ -L\eta / 2 + \alpha \ell } } \\
&\leq \exp_q \inparen{ -nL\eta/4 }\\
&<1.
\end{align*}
In particular, there exists a code $\cC$ of rate $R$ which is $(\alpha, \ell, L)$-list recoverable.

For the other direction, fix any code $\cC$ of rate $R$, and choose a random set of $\alpha n$ indices $T \subset [n]$, and for all $i \in T$, choose $S_i \subset \F_q$ of size $\ell$ uniformly at random.  Now, for any fixed codeword $c \in \cC$, 
\begin{align*}
 \PR{ c_i \subset S_i \forall i \in T } &= \inparen{ \frac{\ell}{q} }^{\alpha n}.
\end{align*}
Thus,
\[ \EE\inabs{ \inset{ c \in \cC \suchthat c_i \in S_i \forall i \in T } } = q^{Rn} \inparen{ \frac{\ell}{q}}^{\alpha n} \geq q^{(R - \alpha)n}. \]
In particular, if $\alpha = R - \eta$, then this is $q^{\eta n}$.
\end{proof}

\end{document}